\title{Efficient Algorithms for Sorting in Trees}
\author{Jishnu Roychoudhury \inst{1} \and Jatin Yadav \inst{2}}
\authorrunning{J. Roychoudhury and J. Yadav}
\institute{United World College of South East Asia, Singapore
\and
Amlgo Labs
}
\date{January 2022}
\begin{document}

\maketitle

\begin{abstract}
Sorting is a foundational problem in computer science that is typically employed on sequences or total orders.
More recently, a more general form of sorting on partially ordered sets (or posets), where some pairs of elements are incomparable, has been studied. General poset sorting algorithms have a lower-bound query complexity of $\Omega(wn + n \log n)$, where $w$ is the width of the poset.

We consider the problem of sorting in trees, a particular case of partial orders, and parametrize the complexity with respect to $d$, the maximum degree of an element in the tree, as $d$ is usually much smaller than $w$ in trees. For example, in complete binary trees, $d = \Theta(1), w = \Theta(n)$. We present a randomized algorithm for sorting a tree poset in worst-case expected $O(dn\log n)$ query and time complexity. This improves the previous upper bound of $O(dn \log^2 n)$. Our algorithm is the first to be optimal for bounded-degree trees. We also provide a new lower bound of $\Omega(dn + n \log n)$ for the worst-case query complexity of sorting a tree poset. Finally, we present the first deterministic algorithm for sorting tree posets that has lower total complexity than existing algorithms for sorting general partial orders.
\end{abstract}

\section{Introduction}

Sorting is a foundational problem in computer science that is typically employed on a set or multi-set of elements. Given an input set $S$ of $n$ elements, typical sorting problems determine the underlying total order or linear sequence of the elements. These algorithms assume that the elements are drawn from an ordered domain (such as the domain of integers). Most sorting algorithm use comparisons in which direct comparisons between pairs of elements allows the algorithms to acquire information about the total order~\cite{cormen2022introduction}. Some recent works consider a restricted version of standard comparison sorting where  only a subset of pairs are allowed to be compared~\cite{alon1994matching,banerjee2016sorting,huang2011algorithms,kuszmaul2022stochastic}.

More recently, a more general form of sorting on partially ordered sets (or posets), where some pairs of elements are incomparable, has been studied~\cite{afshar2021parallel,biswas2017improved,cardinal2013generalized,DKMRE07,Faigle}. Given a input set $P$ of $n$ elements, poset sorting algorithms determine the underlying partial order of the elements. Sorting problems are inherently more challenging for partial orders compared to total orders. Sorting algorithms are generally evaluated by two complexity measures: the {\em query complexity}, the number of comparisons involved, and {\em total complexity}, the total number of computational operations involved. As each individual query can potentially be quite expensive (for example, requiring physical experiments), query complexity is equally important as total complexity. Faigle and Turán were the first to consider the problem of sorting partial orders~\cite{Faigle}, providing an algorithm with $O(wn \log n)$ query complexity, where $n$ is the number of elements and $w$ is the width of the poset. Daskalakis et al.~\cite{DKMRE07} improved on this result to present an algorithm with optimal query complexity $O(wn + n \log n)$, and provided a separate algorithm with time complexity $O(w^2 n \log \frac{n}{w})$. 

Our work is the first to consider sorting tree posets, a particular case of partial orders. In a tree, the value of $w$ is typically very high, leading to existing poset sorting algorithms performing poorly. Therefore, it is important to design efficient customized algorithms for sorting in tree posets with lower query and total complexity. We instead consider the parameter $d$, the maximum out-degree of a node in the natural arborescence representation of the tree. In contrast to width $w$, the degree $d$ is typically much smaller; for example, $d = \Theta(1), w = \Theta(n)$ in complete binary trees.

Apart from the theoretical significance of the problem, there are many practical applications, for example, constructing the evolutionary tree of coronavirus strains or general phylogenetic trees, network mapping, among others. In addition, the final sorted tree can be used as input for tree searching algorithms, such as~\cite{Heeringa,treesearch}. Onak and Parys~\cite{treesearch} considered extending the concept of binary search to trees, whereas Heeringa, Iordan, and Lewis~\cite{Heeringa} considered searching in \textit{dynamic} trees. 

The most similar work to ours is by Wang and Honorio ~\cite{wang2019reconstructing}, in which the authors show how to reconstruct a directed tree with $n$ elements and node degree of at most $d$ using path queries. Afshar et al.~\cite{afshar2020reconstructing} studied the parallel query complexity of reconstructing binary trees from simple queries involving their nodes. Although ~\cite{wang2019reconstructing} does not consider the problem of sorting a tree poset, the problem they consider can be formulated as a constrained version of tree poset sorting. A key difference is that in ~\cite{wang2019reconstructing}, a bound on node degree $d$ is given as an input. In our problem formulation, the maximum degree $d$ is not known as an input and instead is discovered by the algorithm itself. This flexibility significantly improves the applicability of the solution. We only use $d$ as a parameter in complexity analysis. Wang and Honorio provide in their paper a randomized algorithm with query complexity $O(dn \log^2 n)$, given $d$ as an input, and prove lower bounds of $\Omega(dn)$ for the query complexity of any deterministic algorithm and $\Omega(n \log n)$ for the query complexity of any randomized algorithm. We provide a randomized algorithm with query and time complexity $O(dn \log n)$ for sorting a tree poset without a priori knowledge of $d$. This result is optimal for trees of bounded degree, such as binary trees. Our randomized algorithm is unrelated to the algorithm presented in ~\cite{wang2019reconstructing} and utilizes a novel technique based on divide and conquer on the centroid of the tree and random sampling of elements. We also improve the lower bounds for randomized and deterministic algorithms to $\Omega(dn + n \log n)$. Therefore, we improve the gap from $O(d \log^2 n)$ to $O(\log n)$. Finally, we present the first deterministic algorithm for sorting tree posets that has a lower time complexity than existing algorithms for sorting general partial orders.

\section{Preliminaries}
\label{sec:prelim}

We present the formal definitions for the concepts related to posets, tree posets, and sorting in posets. We also repeatedly use certain concepts related to graph-theoretic trees, such as children, edges, degree, and root. In these cases, we use the same terms as graph-theoretic trees and formally define the concepts in terms of set theory here for consistency.

\begin{definition}[poset]
A partially ordered set or poset $\mathcal{P} = (P,\succ)$ is a set of elements $P$ together with a binary relation $\succ \; \subset  P \times P$ which is irreflexive, transitive, and antisymmetric.
\end{definition}
For a pair $(a,b) \in \ \succ$, we write $a \succ b$ and state that $a$ dominates $b$, or $b$ is dominated by $a$.

\begin{definition}[tree]
 A tree is a poset $\tau = (T,\succ)$ where for all elements $t \in T$, the set of ancestors of $t$ in the tree, $Anc_t = \{s \in T : s \succ t\}$, is well-ordered by the relation $\succ$.
\end{definition}

\begin{definition}[maximal element]
In a poset $\mathcal{P} = (P,\succ)$, an element $x \in P$ is a maximal element if the set $\{ y \in P : y \succ x \}$ is empty. The set $M_\mathcal{P}$ is the set of maximal elements in $\mathcal{P}$. We may also refer to a maximal element as a \emph{root}.
\end{definition}
Also, we define an element $e \in S$ to be a maximal element of a set $S$ for some $S \subseteq T$ if $e$ is a maximal element of $\mathcal{S} = (S, \succ \cap \: (S \times S))$.

\begin{definition}[parent, child]
In a tree $\tau = (T,\succ)$, a child of an element $x \in T$ is an element $y \in T$ such that $x \succ y$ and the set $\{ z \in T : x \succ z \succ y\}$ is empty. For an element $t \in T$, we define $ch_t$ to be the set of children of $t$. The parent of an element $x \in T \setminus M_\tau$ is an element $y \in T$ such that $y \succ x$ and the set $\{ z \in T : y \succ z \succ x\}$ is empty. We denote this parent as $par_x$. 

\end{definition}
We note that, for any $y \in ch_x$, $par_y = x$, and that every element has a single parent, except maximal elements, which have no parent.


\begin{definition}[edge]
The set of edges for a tree $\tau = (T,\succ)$ is $E_\tau = \{ (a,b) \in \succ : b \in ch_a \}$. Any pair $(a,b) \in E$ is referred to as an edge. There are $|T| - 1$ edges in a tree with a single maximal element.
\end{definition}

\begin{definition}[degree]
In a tree $\tau = (T,\succ)$, we define the degree of an element $t \in T$, $deg_t$, as the number of children of $t$. We define the maximum degree of the tree, $d(\tau) = \max(|M_\tau|, \max_{t \in T} deg_t)$.
\end{definition}
Throughout our paper, we simply use $d$ to denote $d(\tau)$, as for the problem we deal with, the tree $\tau$ is fixed (but unknown).

For simplicity, we only consider trees with one maximal element in the rest of this paper. We can easily modify any tree to have one maximal element. Say the elements of the tree are $\{t_1, t_2, \dots, t_n\}$.  We introduce a new element $t_{n+1}$, such that $t_{n+1} \succ t_i$ for all $i \in \{1, \dots, n\}$, and sort the modified tree. We never need to make a query involving $t_{n+1}$ as it is known that $t_{n+1}$ dominates all other elements.

\begin{definition}[descendants]
The set of descendants of an element $t$ is $Desc_t = \{s \in T : t \succ s\}$. The subtree of an element $t$ is $Sub_t = \{t\} \cup Desc_t$.
\end{definition}

\begin{definition}[poset width]
In a poset $\mathcal{P}$, an antichain $A \subseteq P$ is a subset of mutually incomparable elements. The width $w(\mathcal{P})$ of the poset is the maximum cardinality antichain of the poset $\mathcal{P}$. It should be noted that in a tree, $w(\tau)$ is equal to the cardinality of the set of minimal elements in $T$.
\end{definition}
Similar to the case for maximum degree, throughout this paper we use $w$ to denote $w(\mathcal{P})$ as for the problems we deal with, the poset $\mathcal{P}$ is fixed.

Next, we define the oracle we use for queries before defining the sorting problem on posets and trees.

\begin{definition}[comparison oracle]
A comparison query $\mathcal{Q}$ for a poset $\mathcal{P} = (P,\succ)$ is a function $\mathcal{Q}_{\mathcal{P}} : P \times P \rightarrow \{\succ, \prec, ||\}$ such that

\[
  \mathcal{Q}_{\mathcal{P}}(i,j)  =\left\{
                \begin{array}{ll}
                  \succ \; \text{if} \; \; i \succ j\\
                  \prec \; \text{if} \; \; j \succ i\\
                  || \; \; \text{otherwise}
                \end{array}
              \right.
 \]
The comparison oracle $\mathcal{O}_{\mathcal{P}}$ responds to comparison queries in $O(1)$ time per query.
\end{definition}
For simplicity, we use $\mathcal{Q}$ to denote $\mathcal{Q}_{\mathcal{P}}$ and $\mathcal{O}$ to denote $\mathcal{O}_{\mathcal{P}}$ throughout this paper. We note that since a tree is also a poset, our definitions of $\mathcal{Q}$ and $\mathcal{O}$ apply for trees as well. We use $||$ to denote incomparability.


\begin{definition}
The sorting problem for a poset $\mathcal{P}$ is defined as follows. Given a set of elements $P$, the information that $\mathcal{P} = (P,\succ)$ is a poset, and a comparison oracle $\mathcal{O}_{\mathcal{P}}$, construct a data structure that can correctly respond to comparison queries in $O(1)$ time without using the oracle $\mathcal{O}_{\mathcal{P}}$.
\end{definition}
This definition of the problem was also (implicitly) used in the paper~\cite{DKMRE07}. Essentially, the sorting problem for a poset attempts to reconstruct the binary relation $\succ$ through a minimal number of queries to a comparison oracle. As $|\succ|$ may be large (on the order of $n^2$), we require the sorting algorithm to derive only a compact representation of $\succ$ that, when fully constructed, can be queried  in $O(1)$ time. This is similar to the classical sorting problem, where the algorithm constructs a linear sequence corresponding to the total order rather than finding the relation between every pair of elements. The indices of two elements in the linear sequence can be used to respond to comparison queries in $O(1)$ time. 

We now define the sorting problem for trees analogously.

\begin{definition}
The sorting problem for a tree $\tau$ is defined as follows. Given a set of elements $T$, the information that $\tau = (T,\succ)$ is a tree, and a comparison oracle $\mathcal{O}_{\tau}$,   construct a data structure that can correctly respond to comparison queries in $O(1)$ time without using the oracle $\mathcal{O}_{\tau}$.
\end{definition}
Our strategy for sorting a tree $\tau$ of $n$ elements consists of two stages: first, to find the $n-1$ edges of the tree, and second, to use this information to construct a data structure that can respond to comparison queries in $O(1)$ time. The second stage can be performed using depth-first search. We briefly describe it here and the pseudocode is included in Appendix~\ref{Appendix B}.

We run a depth-first search of the tree from the root in $O(n)$ time. We let $s_v$ denote the start time of the  element $v$ in the depth-first search, and $e_v$ the end time. We note that $u \succ v$ iff $s_u \le s_v \le e_v \le e_u$. Therefore, we can answer comparison queries in $O(1)$ time.

For the remainder of this paper, we focus on finding the edges of $\tau$, as that is sufficient to sort $\tau$.

\section{Upper Bounds}
\label{sec:upper_bounds}
\begin{theorem}
There is a randomized algorithm which sorts a tree of $n$ elements and maximum degree $d$ in $O(d n \log{n})$ expected worst-case query and time complexity.
\end{theorem}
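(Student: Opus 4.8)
The plan is to reduce the problem to finding the $n-1$ edges (the excerpt shows this suffices) and to find them by a recursive \emph{centroid decomposition} of the unknown tree, using random sampling to locate each centroid. Throughout, a recursive call receives a set $V$ known to be the subtree $Sub_r$ of an already-identified root $r$, and must output all edges inside $V$. The governing idea is that a centroid $c$ of $V$ — a node whose removal leaves every piece of the underlying undirected tree of size at most $|V|/2$ — lets us split $V$ into pieces that are each a factor $2$ smaller, so the recursion has depth $O(\log n)$; the degree $d$ enters only through the cost of regrouping the descendants of $c$ into its child-subtrees.

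Concretely, a call on $V$ with $|V|=m$ proceeds as follows. First, locate a centroid $c$ of $V$ by random sampling (discussed below). Second, query $c$ against every element of $V$; this costs $m-1$ queries and partitions $V$ into $Sub_c$ (the elements dominated by $c$, with $c$) and the \emph{upward part} $U=V\setminus Sub_c$, which is exactly $Sub_r$ with $Sub_c$ deleted, hence again a subtree rooted at $r$ with $|U|\le m/2$. Third, split $Desc_c$ into the subtrees of the children of $c$: repeatedly pick a uniformly random still-unassigned $u\in Desc_c$, recover the chain $Anc_u\cap Desc_c$ by comparing $u$ with the remaining descendants, extract its topmost element (the child $c_i$ of $c$ lying above $u$) by a running-maximum scan, then sweep $c_i$ against all remaining descendants to peel off $Sub_{c_i}$ and remove it. Since $c$ has at most $d$ children, this performs at most $d$ sweeps of $O(m)$ queries each, for $O(dm)$ total. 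Finally, recurse on $U$ (rooted at $r$) and on each $Sub_{c_i}$ (rooted at $c_i$), and reattach the cut edges: $(c,c_i)$ is known, and $par_c$ is the deepest element of $Anc_c$, recovered once $U$ has been sorted.

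For the complexity, note that the pieces produced by one call — the upward part and the child-subtrees — partition $V\setminus\{c\}$ and each has size at most $m/2$, so the subproblem sizes at any fixed recursion depth sum to at most $n$, and the depth is $O(\log n)$. The dominant local cost is the $O(dm)$ regrouping step, so summing $O(dm)$ over one depth gives $O(dn)$ and over all $O(\log n)$ depths gives the claimed $O(dn\log n)$ expected query and time bound, the expectation being over the random choices used to locate centroids and to drive the peeling.

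The main obstacle I expect is the first step: locating a centroid within the $O(dm)$ budget. The naive approach of computing $|Sub_v|$ along the path from $r$ to the centroid can cost $\Omega(m\cdot \mathrm{depth})$ (already fatal on a path, where $d=1$ leaves no slack), and even a clean binary search for the point where the subtree size crosses $m/2$ spends an extra $\log m$ factor that overruns the budget when $d$ is small. Random sampling is what rescues this: a uniformly random element of $V$ lands in the centroid's subtree with probability at least $1/2$ (that subtree contains at least $m/2$ nodes), so with $O(1)$ expected samples we obtain a node whose ancestor chain passes through $c$, and the centroid is then the chain element where the descendant count crosses $m/2$ — essentially a one-dimensional \emph{selection} problem solvable, as in classical median finding, in $O(m)$ expected queries rather than $O(m\log m)$. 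Verifying that the located node is a genuine centroid (no child-subtree exceeds $m/2$) and resampling otherwise keeps the procedure within $O(dm)$ per call; making this selection step rigorous, and confirming that the reattachment of cut edges is always consistent, is where the real work of the proof lies.
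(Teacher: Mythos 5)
Your overall architecture matches the paper's: sample a uniformly random element (it lands in the centroid's subtree with probability at least $\tfrac12$), locate the separator on its ancestor chain, peel off the child-subtrees with at most $d$ linear sweeps, recurse on the upward part and the child-subtrees, and charge $O(dn)$ per level of an $O(\log n)$-depth recursion. All of that is sound and is exactly how \texttt{GET-EDGES} works. The gap is precisely where you flag ``the real work'': your claim that the crossing point on the chain can be found exactly in $O(m)$ expected queries ``as in classical median finding'' does not hold as sketched. Quickselect is linear because the candidate set and the comparison universe are the same set and shrink together. Here they are decoupled: evaluating one pivot's subtree size costs $\Theta(|U|)$ queries against the whole remaining universe, and the universe need not shrink when the chain candidates do. Concretely, take a ``broom'': a path $y_1 \succ y_2 \succ \cdots \succ y_{m/2}$ with a bounded-degree bush of $m/2$ elements hanging off $y_1$. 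The crossing sits between $y_1$ and $y_2$, certifying $|Sub_{y_2}| < \tfrac m2$ forces the bush to stay in play throughout, and pivoting on random chain elements takes $\Theta(\log m)$ expected rounds at $\Theta(m)$ queries each --- $\Theta(m \log m)$ total with $d = O(1)$, blowing the $O(dm)$ budget. This is not an artifact of your pivot rule; pinning down the \emph{exact} centroid appears to genuinely cost an extra logarithmic factor, which is why the naive version of the paper's algorithm runs in $O(dn\log n + n\log^2 n)$.

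The paper's resolution, which your proposal is missing, is to abandon exactness in two ways at once. First, it relaxes the target from the centroid to a \emph{pseudo-centroid} (an element $c$ with $|Sub_c| \ge \tfrac n4$ and every child-subtree of size $< \tfrac n2$), which still bounds every subproblem by $\tfrac{3n}{4}$ and so preserves the $O(\log n)$ recursion depth. Second, it replaces exact subtree sizes by estimates: each element is sampled into $V$ with probability $p = \tfrac{1}{\log n}$, so one size estimate costs $O(n/\log n)$ and an $O(\log n)$-step binary search costs $O(n)$; the same sampling sparsifies the ancestor chain to an expected $O(n/\log n)$ elements, so sorting it costs $O(n)$ (your selection idea was aimed at avoiding this sort, but sampling handles it), followed by a second binary search inside the expected-$O(\log n)$ gap between consecutive sampled chain elements. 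A Chernoff bound shows all estimates are within $\tfrac n{10}$ of the truth with probability $\ge \tfrac34$, a median-of-binomial argument handles the centroid's own estimate, and combining with the probability-$\tfrac12$ sampling event gives success probability $\ge \tfrac18$ per iteration, hence $O(dn)$ expected time per call (the $O(dn)$ term coming from the pseudo-centroid verification, as in your plan). To repair your writeup, either supply a genuinely linear exact-selection procedure robust to broom-type instances --- which I do not believe exists within the query budget --- or adopt the estimate-and-relax route: with the pseudo-centroid in hand, the rest of your argument (the peeling, the reattachment of the edges $(c, c_i)$ and $(par_c, c)$, and the $f(n) = dn$ divide-and-conquer summation) goes through unchanged.
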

    Note that time complexity supersedes query complexity as each query itself takes constant time. Therefore, we will prove that the time complexity of our algorithm is $O(d n \log n)$. We will begin by describing an algorithm with a time complexity of $O(d n \log n + n \log^2 n)$. Then, we will reduce the complexity to $O(dn \log n)$ by making use of estimations based on random sampling, instead of considering all the elements of the tree.

\section{An $O(d n \log n + n \log^2 n)$ algorithm}
\label{sec:naive}

Our algorithm relies on finding a \textit{good} separating element in the tree and then dividing the tree into disjoint sub-problems around this element, such that no sub-problem has a large size. In this section, we use the centroid of the tree as the separating element.

\begin{definition}[Centroid]
The centroid of a tree $\tau$ with $n$ elements is an element $c$ such that the subtree size of $c$, $|Sub_c| \geq \frac{n}{2}$, but for all $x \in ch_c$, $|Sub_x| < \frac{n}{2}$.
\end{definition}

\begin{lemma}
The centroid exists and is unique.
\end{lemma}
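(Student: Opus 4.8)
The plan is to prove existence and uniqueness separately, both relying on the basic identity that for the root $r$, $|Sub_r| = n \geq n/2$, and that subtree sizes shrink as we descend the tree. The definition requires a centroid $c$ to satisfy $|Sub_c| \geq n/2$ while every child $x \in ch_c$ has $|Sub_x| < n/2$, so the strategy is to track how subtree size evolves along a root-to-leaf path and identify the unique transition point.

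For existence, I would argue constructively by a descent procedure. Start at the root $r$; since $|Sub_r| = n \geq n/2$, the root itself satisfies the first condition. Now repeatedly move to a child of large subtree size: if the current element $c$ has $|Sub_c| \geq n/2$ but some child $x \in ch_c$ has $|Sub_x| \geq n/2$, then replace $c$ by that child $x$ and continue. This process must terminate because subtree sizes strictly decrease along any downward path (a proper subtree is strictly smaller), and the set $T$ is finite. When it terminates, the current element $c$ satisfies $|Sub_c| \geq n/2$ (this invariant is maintained at every step, since we only move to children whose subtree size is at least $n/2$) and no child has subtree size $\geq n/2$, which is exactly the centroid condition.

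For uniqueness, the key observation is that at most one child can have a subtree of size $\geq n/2$. Suppose for contradiction that an element $c$ had two distinct children $x, y$ with $|Sub_x| \geq n/2$ and $|Sub_y| \geq n/2$; since the subtrees $Sub_x$ and $Sub_y$ are disjoint and both are proper subsets of $Sub_c \subseteq T$, we would get $|T| \geq |Sub_x| + |Sub_y| \geq n$, leaving no room for $c$ itself, a contradiction with $c \notin Sub_x \cup Sub_y$. This shows the descent path above is in fact forced at every step (there is never a choice of which large-subtree child to follow), so the procedure traces a unique path and produces a unique centroid. To make this fully rigorous I would also argue that any centroid $c'$ must lie on this forced path: if $|Sub_{c'}| \geq n/2$ then walking up from $c'$ to the root, every ancestor $a$ satisfies $c' \in Sub_a$ with $|Sub_{c'}| \geq n/2$, so $a$ has a child on the path to $c'$ of subtree size $\geq n/2$, forcing $a$ to continue the descent; hence $c'$ coincides with the terminal element of the descent.

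The main obstacle I anticipate is being careful about the bookkeeping of the two-sided inequality: the centroid condition mixes a weak inequality ($\geq n/2$ for $Sub_c$) with a strict one ($< n/2$ for each child). The uniqueness argument via disjoint subtrees is clean, but I would need to ensure the edge cases are handled, particularly when $n$ is even and some subtree size equals exactly $n/2$ — the counting bound $|Sub_x| + |Sub_y| \geq n$ still yields a contradiction because it forces $c$ to have no room, and the descent terminates correctly regardless of whether the relevant inequalities are tight. Establishing that the forced descent path both terminates and characterizes all centroids is the crux; everything else is routine finiteness and monotonicity of subtree sizes.
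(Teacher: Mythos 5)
Your proof is correct and follows essentially the same approach as the paper: a descent from the root that repeatedly moves to a child of subtree size $\geq n/2$, with termination by finiteness and uniqueness via the fact that distinct children's subtrees are disjoint. Your write-up merely makes explicit the counting argument (at most one child can have a large subtree) and the forced-path characterization that the paper's terser invariant ``any centroid lies in the current subtree'' leaves implicit.
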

\begin{proof}
Consider the following algorithm. We start at the root, and keep recursing to the child with subtree size $\ge \frac{n}{2}$ until there is no such child. This maintains the invariant that any centroid must be inside the subtree of the current element. This process must terminate as the number of elements is finite, and it can only terminate on a centroid (let it be $c$). Any other centroid must lie in the subtree of $c$, but all children of $c$ have subtree sizes less than $\frac{n}{2}$, so $c$ must be the only centroid.
\end{proof}

\subsection{Algorithm Outline}
\label{5.1}
\noindent Following is a brief outline of our algorithm for sorting a tree of $n$ elements and maximum degree $d$ in $O(dn \log n + n \log^2 n)$ time. The detailed pseudocode of the algorithm, \texttt{GET-EDGES}, appears later.
\begin{enumerate}
    \item Choose a random element $x$ and sort the list of its ancestors.
    \item Binary search on this list to find the minimal element $c$ whose subtree has size at least $\frac{n}{2}$.
    \item If $c$ is the centroid continue to step 4, else go back to step 2.
    \item Divide the elements of the tree (except $c$) into $|ch_c|+1$ subproblems. The subtree of each child of $c$ forms a subproblem and all the remaining elements of the tree except the subtree of $c$ form an additional subproblem. 
    \item Solve the sorting problem recursively for each subproblem.
    \item Add the edges from $c$ to its children, and add an edge from the parent of $c$ (if any) to $c$. 
\end{enumerate}

\subsection{Algorithm Details and Complexity Analysis}

Our algorithm is given the initial set of elements, $T$, and a comparison oracle $\mathcal{O}$ for the tree $\tau = (T, \succ)$. Note that when we are solving a subproblem on a subset $H \subseteq T$ of elements, all of the terms defined in Section~\ref{sec:prelim} should be considered with respect to the tree $\tau^{\prime} = (H, \succ \cap \: (H \times H) )$, and, in addition, we use $n = |H|$.

Let us first define some subroutines, all of which make use of the oracle $\mathcal{O}$. We start with \texttt{GET-ROOT}, an algorithm for finding a maximal element in a set of elements $H \subseteq T$. Suppose we have a set $S$ and a maximal element $x \in S$. When we add element $e$ to $S$, there are two possibilities:
\begin{itemize}
    \item $e \succ x$: $e$ must be a maximal element of $S \cup \{e\}$. 
    \item Otherwise, $x$ is also a maximal element of $S \cup \{e\}$. 
\end{itemize}

\noindent This gives us an algorithm to find a maximal element of any set $H$ in $O(n)$ time. We initialise $S$ to a single element and iteratively add the remaining elements of $H$ to $S$ while maintaining a maximal element of $S$.

\begin{algorithm}[H]
    \SetKwInOut{Input}{Input}
    \SetKwInOut{Output}{Output}
    \SetAlgoLined

    \underline{GET-ROOT}\;
    \Input{A set $H \subseteq T$ }
    \Output{A maximal element of $H$}
    Let $H = \{h_1, h_2, \ldots h_n\}$\\
    $x \leftarrow h_1$\\
    \For{$i = 2, \ldots n$}{
          \If{$h_i \succ x$}{
              $x \leftarrow h_i$
          }
    }
    \Return{x}
\end{algorithm}


We now define the algorithm \texttt{GET-SUBTREES-OF-CHILDREN}, which finds the subtrees of the children of a given element $r$ within the tree. This algorithm is used to test whether a given element is the centroid, as well as to divide the tree into subproblems around the centroid.

\texttt{GET-SUBTREES-OF-CHILDREN} works as follows. We start by initialising $S \leftarrow Desc_r$. We then repeatedly find a maximal element of $S$ by calling \texttt{GET-ROOT}, append its subtree to the output, and remove its subtree from $S$. We repeat until $S = \emptyset$.


\begin{algorithm}[H]
    \SetKwInOut{Input}{Input}
    \SetKwInOut{Output}{Output}
    \SetAlgoLined

    \underline{GET-SUBTREES-OF-CHILDREN}\;
    \Input{A set $H\subseteq T$ having exactly one maximal element, and an element $r \in H$}
    \Output{List of subtrees of the children of $r$}
    $S \gets \{x \in H: r \succ x\}$\\
    $C \gets []$ // An empty list \\
    \While{$S \neq \emptyset$}{
        $x \gets $ \texttt{GET-ROOT}($S$)\\
        $X \gets \{x\} \cup \{y \in S \; | \; x \succ y\}$\\
        Append $X$ to $C$ \\
        $S \gets S \setminus X$
    }
    \Return{C}
\end{algorithm}

\begin{lemma} The algorithm \texttt{GET-SUBTREES-OF-CHILDREN} correctly outputs all the subtrees of the children of a given element $r$ in $O(dn)$ time.
\end{lemma}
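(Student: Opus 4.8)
The plan is to establish correctness via a loop invariant and then bound the running time by counting iterations. First I would record the key structural fact that the maximal elements of $Desc_r$ are exactly the children of $r$: any $c \in ch_r$ has nothing in $Desc_r$ dominating it, because a dominator $s$ would satisfy $r \succ s \succ c$, contradicting that $c$ is a child; conversely any maximal element $x$ of $Desc_r$ must be a child, since otherwise some $z$ with $r \succ z \succ x$ would give $z \in Desc_r$ dominating $x$. This between-ness argument is what drives every iteration, and I would note it carries over to any $S$ that is a union of complete subtrees of children of $r$ (using that every element of $S$ is a proper descendant of $r$, hence cannot be an ancestor of any child $c$).

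Next I would prove the loop invariant: at the start of each iteration, $S$ equals the union $\bigcup Sub_c$ taken over the children $c$ of $r$ not yet output. The base case holds since $S$ is initialized to $\{x \in H : r \succ x\} = Desc_r = \bigcup_{c \in ch_r} Sub_c$. For the inductive step, under the invariant the maximal elements of $S$ are precisely the not-yet-processed children (by the relativized between-ness argument), so the element $x$ returned by \texttt{GET-ROOT}$(S)$ is one such child. Because $S$ contains all of $Desc_c$ for every unprocessed child $c$, in particular $Desc_x \subseteq S$, and therefore $X = \{x\} \cup \{y \in S : x \succ y\} = \{x\} \cup Desc_x = Sub_x$. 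Removing $X$ from $S$ re-establishes the invariant with one fewer child; since each iteration removes a nonempty set, the loop terminates with $S = \emptyset$, having appended each $Sub_c$ exactly once, which is precisely the list of subtrees of the children of $r$.

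For the running time, I would observe that the number of iterations equals $deg_r \le d$, since each iteration outputs the subtree of a distinct child. Within one iteration, the call to \texttt{GET-ROOT}$(S)$ costs $O(|S|)$ by the stated analysis of \texttt{GET-ROOT}, and forming $X$ and updating $S$ each scan $S$ once, so the per-iteration cost is $O(|S|) = O(n)$ where $n = |H|$. Multiplying the number of iterations by the per-iteration cost gives the claimed $O(dn)$ total query and time complexity.

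The main obstacle, and the step I would be most careful to justify, is the equality $X = Sub_x$: that the single maximal-extraction step captures the \emph{entire} subtree of $x$ rather than a truncated piece of it. This is exactly where the invariant pays off, since only because $S$ always consists of complete subtrees of unprocessed children do we know none of $x$'s descendants were prematurely removed. Without the invariant one might worry that an unlucky interleaving of extractions could split a child's subtree across several output entries; the invariant is what rules this out.
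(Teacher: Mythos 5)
Your proof is correct and follows essentially the same approach as the paper's: the paper's (much terser) argument likewise rests on the facts that each extracted maximal element is a child of $r$, that there are at most $d$ such iterations, and that each iteration costs $O(n)$. Your loop invariant and the justification that $X = Sub_x$ simply make rigorous what the paper asserts as "clear," so there is no substantive difference in method.
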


\begin{proof}
It is clear that each iteration of lines 5 to 8 takes $O(n)$ time. Moreover, the maximal element we find in each iteration is always a child of $r$, and there are at most $d$ such elements; thus, we make at most $d$ iterations. Therefore, the total time complexity is $O(dn)$.
\end{proof}
\noindent We can test whether an element is a centroid by first finding its subtree in $O(n)$ time and then finding the subtrees of all its children in $O(dn)$ time using \texttt{GET-SUBTREES-OF-CHILDREN}.

We are now ready to describe the algorithm \texttt{GET-CENTROID} for finding the centroid of a tree. \texttt{GET-CENTROID} starts by taking a random element $t$ from the tree and finding the list of ancestors of $t$, $Anc_t$. The list $Anc_t$ is well-ordered by definition, so it may be sorted using a typical $O(n \log n)$ algorithm for sorting a total order. The algorithm then binary searches to find the minimal element $y \in Anc_t \cup  \{t\}$ which satisfies $|Sub_y| \ge \frac{n}{2}$. The algorithm finally checks if $y$ is the centroid: if so, it terminates, and otherwise it repeats from the beginning until it finds a centroid.  

\begin{algorithm}[H]
    \SetKwInOut{Input}{Input}
    \SetKwInOut{Output}{Output}
    \SetAlgoLined

    \underline{GET-CENTROID}\;
    \Input{A set $H\subseteq T$ which has exactly one maximal element}
    \Output{The centroid of the tree $(H, \succ \cap \: (H \times H))$}
    $n \gets |H|$\\
    Choose a random element $t$ from $H$\\
    \While{$t$ is not the centroid}{
        Choose a random element $x$ from $H$\\
        $Y\gets \{x\} \cup \{y \in H \: | \: y \succ x\}$\\
        Sort $Y$ according to the order relation $\succ$ /*$Y$ is well ordered by definition*/ \\
        Let $Y = \{y_1, y_2, \ldots y_k\}$ where $y_1 \succ y_2 \succ \ldots y_{k-1} \succ y_k$
        
        Binary search on the largest $i$, such that the subtree size of $y_i$ is $\geq n/2$. \\
        $t \gets y_i$
    }

    \Return{$t$}
\end{algorithm}

\begin{lemma} The algorithm \texttt{GET-CENTROID} finds the centroid in $O(dn + n \log n)$ expected time.
\end{lemma}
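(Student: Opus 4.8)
The plan is to split the analysis into two independent pieces—the cost of a single pass through the \texttt{while} loop, and the expected number of such passes—and then combine them. First I would bound the per-iteration cost. Building $Y = \{x\} \cup \{y \in H : y \succ x\}$ requires comparing $x$ against every element of $H$, so $O(n)$. Sorting $Y$ costs $O(|Y|\log|Y|) = O(n\log n)$, since $Y$ is a chain and hence sortable with a standard total-order sort. The binary search performs $O(\log n)$ probes, and each probe evaluates a subtree size $|Sub_{y_i}|$ by comparing $y_i$ against all of $H$ in $O(n)$ time, giving $O(n\log n)$ overall. Evaluating the loop guard—testing whether $t$ is the centroid—costs $O(dn)$ via \texttt{GET-SUBTREES-OF-CHILDREN} together with an $O(n)$ computation of $|Sub_t|$. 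Thus each pass, including its guard check, costs $O(dn + n\log n)$, and crucially this bound is deterministic: it does not depend on the random sample $x$.

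The heart of the argument is showing the expected number of iterations is $O(1)$, and this is the step I expect to require the most care. The key structural fact is that the elements with subtree size $\ge n/2$ are exactly the elements on the path from the root down to the centroid $c$: subtree size is non-increasing along any root-to-leaf path, and $c$ is by definition the deepest such ``heavy'' element. I would then argue that whenever the sampled $x$ satisfies $x \in Sub_c$, the chain $Y$ contains $c$; every element of $Y$ at or above $c$ is heavy, while every element of $Y$ strictly below $c$ lies inside the subtree of some child of $c$ and so has subtree size $< n/2$ by the centroid definition. Hence the binary search for the deepest heavy element on $Y$ returns precisely $c$, so $t$ is set to the centroid and the loop exits at the following guard check.

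Since $x$ is drawn uniformly from $H$ and $|Sub_c| \ge n/2$, the event $x \in Sub_c$ occurs with probability at least $1/2$, and because a fresh $x$ is sampled each time, this holds independently of previous failures. The number of iterations is therefore stochastically dominated by a geometric random variable with success probability $1/2$, so its expectation is at most $2$. Finally I would combine the two pieces: because each iteration's cost is bounded by the same deterministic quantity $O(dn + n\log n)$ regardless of whether it succeeds, the expected total running time is at most $E[N]\cdot O(dn + n\log n) = O(dn + n\log n)$ by linearity of expectation. The only subtleties to nail down are the monotonicity claim that makes the binary search return $c$ exactly when $x \in Sub_c$, and the verification that each subtree-size evaluation really costs only $O(n)$ so that no extra logarithmic factor creeps into the per-iteration bound.
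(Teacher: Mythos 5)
Your proof is correct and takes essentially the same approach as the paper's: bound each iteration (sorting $Y$, the binary search with $O(n)$-time subtree-size probes, and the $O(dn)$ centroid check) by $O(dn + n\log n)$, then show each iteration succeeds with probability at least $\frac{1}{2}$ because $x$ lands in $Sub_c$ with that probability and the binary search then returns exactly $c$. Your added justification of the monotonicity of subtree sizes along the chain (heavy elements being precisely the root-to-centroid path) is a slightly more careful spelling-out of what the paper states in one line, not a different argument.
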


\begin{proof}
    It is easy to see that every iteration of lines $4$ to $10$ takes $O(dn + n \log n)$ time. The sorting of the set $Y$ takes $O(n \log n)$ time and testing whether $t$ is a centroid takes $O(dn)$ time. In addition, in each step of the binary search of line 9 we find the subtree size of some element, which uses $O(n)$ time. Since there are $O(\log n)$ steps of binary search, line 9 uses $O(n \log n)$ time overall. Everything else takes $O(n)$ time in total.
    
    Let the centroid be $c$. With probability at least $\frac{1}{2}, c = x$ or $c \succ x$. This is because the subtree size of $c$ is greater than or equal to $\frac{n}{2}$ by definition, and $x$ is chosen uniformly at random out of the $n$ elements. When this happens, $c \in Y$ and the binary search returns $y_i = c$, as the subtree sizes for the descendants of $c$ are less than $\frac{n}{2}$. Hence, with probability at least $\frac{1}{2}$ we find the centroid. Thus, the expected number of iterations is less than or equal to $2$, and we get the centroid in $O(dn + n \log n)$ expected time.
\end{proof}
\noindent We now present the pseudocode for the overall algorithm \texttt{GET-EDGES}, which was outlined previously in Subsection~\ref{5.1}.

\begin{algorithm}[H]
    \SetKwInOut{Input}{Input}
    \SetKwInOut{Output}{Output}
    \SetAlgoLined

    \underline{GET-EDGES}\;
    \Input{A set $H\subseteq T$ which is either empty or has exactly one maximal element}
    \Output{The set of the $|H|-1$ edges of $(H, \succ \cap \: (H \times H))$}
    $n \gets |H|$ \\
    \If{$n \leq 1$}{
        \Return{$\{\}$}
    }
    $c \gets $ \texttt{GET-CENTROID}($H$) \\
    $\chi \gets$  \texttt{GET-SUBTREES-OF-CHILDREN}($H$, $c$) \\
    $C \gets \{c\} \cup \{x \in H \: | \: c \succ x\}$ \\
    $Y \gets H \setminus C$ \\
    $E \gets $ \texttt{GET-EDGES}($Y$) \\
    \For{$X \in \chi$}{
        $E \gets E \: \cup $ \texttt{GET-EDGES}($X$) \\
        $E \gets E \cup \{(c, \texttt{GET-ROOT}($X$))\}$
    }
    \If{$Y \neq \emptyset$}{
        $p = \min \{y \in Y \: | \: y \succ c\}$ /* $\min$ is defined, as the set of ancestors is well ordered */ \\
        $E \gets E \cup \{(p, c)\}$
    }
    \Return{$E$}
\end{algorithm}

\begin{lemma} The algorithm \texttt{GET-EDGES}(H) correctly outputs the edges of $H$ in worst case expected $O(d n \log n + n \log^2 n)$ time. 
\end{lemma}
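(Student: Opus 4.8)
The plan is to prove correctness by structural induction on $|H|$ and then bound the expected running time via a centroid-decomposition argument.

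For correctness, the base case $|H| \le 1$ is immediate, since such a set has no edges. For the inductive step, fix the centroid $c$ returned by \texttt{GET-CENTROID}. First I would show that $Y = H \setminus Sub_c$, the subtrees $X \in \chi$ (the subtrees of the children of $c$), and $\{c\}$ form a partition of $H$, and that each piece is a valid input to a recursive call. Each $X = Sub_x$ for a child $x$ of $c$ has the single maximal element $x$, and I would argue $Y$ has a single maximal element equal to the root of $H$: for any non-root $b \in Y$, its parent $par_b$ cannot lie in $Sub_c$ (if $par_b = c$ then $b \in Sub_c$, and if $c \succ par_b$ then $c \succ b$, both contradicting $b \in Y$), so $par_b \in Y$, and following parents stays inside $Y$ until the root is reached. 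The crucial structural claim is then that every edge $(par_b, b)$ of $H$ is either internal to one piece or incident to $c$: the same parent argument shows that within any piece the parent of a non-maximal element stays in that piece, so the only edges not recovered by the recursive calls are those joining $c$ to its children together with the edge from $par_c$ to $c$. These are exactly the edges the algorithm adds: for each $X \in \chi$, \texttt{GET-ROOT}$(X)$ returns the child $x$ rooting $X$, giving the correct edge $(c,x)$; and when $Y \neq \emptyset$, all ancestors of $c$ lie in $Y$ and are well-ordered by $\succ$, so $p = \min\{y \in Y : y \succ c\}$ is precisely $par_c$, giving the edge $(p,c)$. By the induction hypothesis the recursive calls return exactly the internal edges of each piece, so their union together with the incident edges is exactly the edge set of $H$.

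For the complexity, I would first observe that, excluding the recursive calls, the work done in one invocation on a set of size $m$ is dominated by \texttt{GET-CENTROID}, which runs in expected $O(dm + m \log m)$ time; \texttt{GET-SUBTREES-OF-CHILDREN} costs $O(dm)$, and forming $C$, $Y$, the calls to \texttt{GET-ROOT}, and locating $p$ all cost $O(m)$ in total. The key point is that since the centroid is unique, the entire tree of recursive calls is determined by $\tau$ alone and does not depend on the random choices; the randomness only affects the time spent inside each \texttt{GET-CENTROID} call. Hence by linearity of expectation the expected total time is the sum, over all nodes of this fixed recursion tree, of the expected per-call cost $O(dm + m\log m)$.

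To evaluate that sum I would group the recursive calls by depth. The defining property of the centroid gives $|Sub_c| \ge m/2$, hence $|Y| \le m/2$, while each child subtree satisfies $|X| < m/2$; thus every subproblem has size at most $m/2$, and the recursion depth is $O(\log n)$. At any fixed depth the subproblems are pairwise disjoint subsets of the original tree, so their sizes sum to at most $n$; therefore the total expected cost at one depth is $\sum_i O(d m_i + m_i \log m_i) = O(dn + n \log n)$. Multiplying by the $O(\log n)$ depth yields the claimed bound $O(dn \log n + n \log^2 n)$. I expect the main obstacle to be the structural partition argument — establishing rigorously that $Y$ is itself a tree with a single maximal element and that no edge crosses between pieces except through $c$ — since the complexity analysis is then a routine centroid-decomposition summation, made clean by the uniqueness of the centroid.
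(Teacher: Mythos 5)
Your proposal is correct and takes essentially the same approach as the paper: centroid-based divide and conquer, expected $O(dn + n\log n)$ work per invocation, $O(\log n)$ recursion depth, and a per-level summation bounded by $O(dn + n\log n)$ --- the paper simply packages that summation as Theorem~\ref{Randomised DNC} (using superadditivity of $f(n) = dn + n\log n$) where you inline it via disjointness of the subproblems at each level. Your detailed correctness induction and your observation that uniqueness of the centroid makes the recursion tree deterministic (so linearity of expectation applies over a fixed tree) are welcome elaborations of steps the paper treats tersely, but they do not constitute a different route.
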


\begin{proof}
    Our algorithm first finds the edges inside the subtree of the centroid $c$ by recursively calling \texttt{GET-EDGES} and adding $(c, x)$ to the list of edges for every child $x$ of $c$.
    After this, we add the edges that are outside the subtree. Finally, we add the edge between $c$ and its parent if it exists. Clearly, the algorithm is correct as it returns all the edges. 
    
    Let $Q_d(n)$ be the expected time needed by the algorithm in the worst case (the maximum expected value over all different trees with $n$ elements and maximum degree $\le d$).

    Note that every sub-problem is either a subtree of a child of $c$, which by definition has a size $< \frac{n}{2}$, or is equal to $H \setminus C$, where $C$ is the subtree of $c$. As $|C| \geq \frac{n}{2}$, all sub-problems have sizes $\leq \frac{n}{2}$. Also, note that all sub-problems also have degree $\le d$.
    
    It takes $O(dn + n \log n)$ expected time to get the centroid. Then, we divide into sub-problems by calling \texttt{GET-SUBTREES-OF-CHILDREN} in $O(dn)$. All other steps are done in $O(n)$ total. Thus, it takes $O(dn + n \log n)$ expected time to get the division into sub-problems. Moreover, the function $f(n) = dn + n \log n$ satisfies $f(u) \ge 0, f(u) + f(v) \le f(u + v)$ for all $u, v \ge 1$. Using Theorem~\ref{Randomised DNC}, the overall time complexity is $Q_d(n) = O((dn + n \log n) \log n)$ = $O(dn \log n + n \log^2 n)$.
\end{proof}

\section{Optimizing to $O(dn \log n)$}

We want to decrease the complexity of \texttt{GET-CENTROID} from $O(dn + n \log n)$ to $O(dn)$. In our solution, one part of this optimization is to use random sampling to find estimated subtree sizes in $(\frac{n}{\log n})$, instead of finding exact sizes in $O(n)$. In this process, we might not get the centroid from the binary search because of errors in subtree size computation, but we prove that, with a constant lower bound on probability, we get another form of a \textit{good} separator, which we term a \textit{pseudo-centroid}.

\begin{definition}[Pseudo-Centroid]
    A pseudo-centroid of a tree $\tau = (T, \succ)$ is an element $c$, for which the size of subtree of $c$ is $\geq \frac{n}{4}$, and for any of its children, the size of subtree is $< \frac{n}{2}$. In particular, note that the centroid is also a pseudo-centroid.
\end{definition}

\noindent It is clear that, if we replace the centroid by a pseudo-centroid in the previous algorithm, all the sub-problem sizes are less than or equal to $\frac{3n}{4}$. It turns out that we can find a pseudo-centroid in expected $O(n)$ time using sampling, and this removes $O(n \log n)$ from the complexity of \texttt{GET-CENTROID}. We assume sufficiently large $n$ (greater than some constant) in the discussion to follow to avoid minor details such as the possibilities of $\log n = 0$ or $\frac{1}{\log n} > 1$, and for some inequalities to be true. 
\newline

\noindent Let $p = \frac{1}{\log n}$. For finding a pseudo-centroid efficiently, we make the following changes:
\begin{enumerate}
    \item We do not find the exact subtree size when binary searching. Instead, we first randomly sample a subset $V$ of $H$ by adding each element of $H$ to $V$ with an independent probability of $p$. Let's define $S_i$ to be the size of the subtree of element $i$. Now, we define the sampled subtree size, $F'_i$ to be the number of sampled elements that lie inside the subtree of $i$. Clearly $\mathbb{E}(F'_i) = S_i p$. Also, we define the estimated subtree size of $i$ to be $S'_i = \frac{F'_i}{p} $, so that $\mathbb{E}(S'_i) = S_i$. This reduces the the time taken in subtree size computation by a factor of $\log n$.
    \item Now, we will return the minimal element in $Y$ whose estimated subtree size is greater than or equal to $\frac{n}{2} - \frac{1}{p}$. Here, two things have changed: we are using estimated subtree size instead of subtree size, and we now have $\frac{n}{2} - \frac{1}{p}$ as the threshold as opposed to $\frac{n}{2}$ before. Note that the monotonicity is still maintained, i.e. the estimated subtree size of a child will not be greater than that of its parent. So, we can still perform binary search to do this step.
    \item We do not sort the whole chain of ancestors $Y$, which earlier lead to a sorting overhead of $O(n \log n)$. Instead, we first randomly sample a subset $Z$ of $Y$ by adding each element of $Y$ to $Z$ with an independent probability of $p$, and then sort $Z$. Since $\mathbb{E}(|Z|) = \frac{\mathbb{E}(|Y|)}{\log n}$, the expected sorting overhead is now $O(n)$.
    \item We perform the binary search in two phases:
    \begin{enumerate}
        \item Binary search to find the position of the \texttt{estimated} centroid within $Z$.
        \item Find the position within the expected $O(\log n)$ elements that lie in-between this element and the immediate next(if any) in $Z$
    \end{enumerate}

\end{enumerate}

\noindent Once we have found the pseudo-centroid, the rest of the algorithm remains the same as before.

\begin{algorithm}
    \SetKwInOut{Input}{Input}
    \SetKwInOut{Output}{Output}
    \SetAlgoLined

    \underline{GET-PSEUDO-CENTROID}\;
    \Input{A set $H\subseteq T$ having exactly one maximal element}
    \Output{A pseudo centroid of $(H, \succ \cap \: (H \times H))$}
    $n \gets |H|$ \\
    Choose a random element $t$ from $H$\\
    \While{$t$ is not a pseudo-centroid}{
        $p \gets \ \frac{1}{\log n}$ \\
        $V \gets \{\}$ \\
        
        \For{$i \in H$}{
            Add $i$ to $V$ with probability $p$
        }
        
        Choose a random element $x$ from $H$.\\
        
        $Y\gets \{x\} \cup \{y \in H \: | \: y \succ x\}$\\
        
        $Z \gets \{ \texttt{GET-ROOT}(H)\}$ \\
        
        \For{$i \in Y$}{
            Add $i$ to $Z$ with probability $p$
        }
        
        Sort $Z$ according to the order relation $\succ$ \\
        
        Let $Z = \{z_1, z_2, \ldots z_k\}$ where $z_1 \succ z_2 \succ \ldots z_{k-1} \succ z_k$ \\
        
        Define $S'_j$ to be the estimated size of subtree of $j$, $S'_j = \dfrac{|(\{j\} \cup \{i \in H |  j \succ i\}) \cap V |}{p}$ \\
        
        \If{$S'_{z_1} \geq \frac{n}{2} - \frac{1}{p}$}{
            Binary search to find the largest $i$, such that $S'_{z_i} \geq \frac{n}{2} - \frac{1}{p}$ \\
            
            $L \gets \{z_i\}$\\
            
            \For{$ u \in Y $}{
                \If{($i \neq k$ and $z_i \succ u$ and $u \succ z_{i+1}$) or ($i = k$ and $z_i \succ u$)}{
                    $L \gets L \cup \{u\}$
                }
            }Sort $L$ according to the order relation $\succ$\\
        
            Let $L = \{l_1, l_2, \ldots, l_r\}$, where $l_1 \succ l_2 \ldots l_{r-1} \succ l_r$ \\
            
            Binary search to find the largest $i$, such that $S'_{l_i} \geq \frac{n}{2} - \frac{1}{p}$
            
            $t \gets l_i$
        }
    }
    \Return{$t$}
\end{algorithm}

\begin{lemma} The algorithm \texttt{GET-PSEUDO-CENTROID} finds a pseudo-centroid in $O(dn)$ expected time.

\end{lemma}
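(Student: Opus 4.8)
The plan is to prove the lemma in Las-Vegas form: the \texttt{while} loop only terminates once $t$ has been explicitly verified to be a pseudo-centroid, so the output is always correct and it remains to bound the expected running time. I would factor this into (i) each iteration of the loop runs in $O(dn)$ expected time, and (ii) each iteration independently produces a pseudo-centroid with probability at least a positive constant $\gamma$, so the number of iterations is stochastically dominated by a geometric variable of mean $1/\gamma=O(1)$. Since the event that we reach iteration $k$ depends only on iterations $1,\dots,k-1$ and is therefore independent of the cost of iteration $k$, Wald's identity turns (i) and (ii) into a total bound $O(1)\cdot O(dn)=O(dn)$.

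For (i), the dominant cost is the pseudo-centroid test at the top of the loop: computing $|Sub_t|$ exactly is $O(n)$ and running \texttt{GET-SUBTREES-OF-CHILDREN} is $O(dn)$ by the earlier lemma, for $O(dn)$ per iteration. I would then check that everything else is $O(n)$ in expectation: sampling $V$, building the chain $Y$, and sampling $Z$ are each $O(n)$; since $\mathbb{E}|Z|=O(|Y|/\log n+1)$ and $|Z|\le|Y|\le n$, sorting $Z$ costs $\mathbb{E}[|Z|\log|Z|]\le\mathbb{E}[|Z|]\log n=O(n)$; each evaluation of an estimated size $S'_j$ scans $V$ in $O(|V|)$ time with $\mathbb{E}|V|=n/\log n$, and the first binary search makes $O(\log n)$ such evaluations, giving $O(n)$ in expectation; finally the set $L$ of chain elements in the gap below the phase-one pivot has expected size $O(\log^2 n)$ (it is bounded by the longest run of unsampled elements among $|Y|$ independent $p$-coin flips), so assembling $L$, sorting it, and the second binary search are all $o(n)$. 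I also need the two-phase search to return the intended element: because $S'_j=|Sub_j\cap V|/p$ is monotone non-increasing along the chain (nested subtrees), the chain elements exceeding the threshold form a prefix, the pivot $z_i$ and its successor $z_{i+1}$ bracket the deepest such element, and the phase-two search over $L$ locates it exactly; I would verify this inclusion formally.

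For (ii), let $c$ be the true centroid and $p=1/\log n$, and consider three events. Let $E_1=\{c\in Y\}$, which by definition of $Y$ is exactly $\{x\in Sub_c\}$, so $\Pr(E_1)=|Sub_c|/n\ge\frac12$ and depends only on $x$. Let $E_c=\{S'_c\ge\frac n2-\frac1p\}$; writing $F'_c=|Sub_c\cap V|\sim\mathrm{Bin}(S_c,p)$ with $\mathbb{E}F'_c=pS_c\ge pn/2$, the event is $\{F'_c\ge pn/2-1\}$, and since the median of a binomial is at least $\lfloor\mathbb{E}F'_c\rfloor\ge pS_c-1\ge pn/2-1$, I get $\Pr(E_c)\ge\frac12$, depending only on $V$. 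Let $E_2=\{|S'_j-S_j|\le n/8\text{ for all }j\in Y\}$; a Chernoff bound gives $\Pr(|S'_j-S_j|>n/8)=e^{-\Omega(n/\log n)}$, so a union bound over at most $n$ elements yields $\Pr(E_2)\ge1-o(1)$. Because $x$ and $V$ use independent randomness, $\Pr(E_1\cap E_c\cap E_2)=\Pr(E_1)\,\Pr(E_c\cap E_2)$, and $\Pr(E_c\cap E_2)\ge\Pr(E_c)-\Pr(\neg E_2)\ge\frac12-o(1)$, so the product is at least $\frac12\bigl(\frac12-o(1)\bigr)\ge\frac16$ for large $n$.

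It remains to argue that $E_1\cap E_c\cap E_2$ forces the returned element $t$ to be a pseudo-centroid, which is where the relaxed threshold and relaxed definition do their work and is the crux. Under $E_1$ the centroid lies on the chain, and under $E_c$ it passes the threshold (as does the root $z_1$, since $S'_{z_1}\ge S'_c$, so the conditional branch executes), so the deepest chain element exceeding the threshold, which is what the search returns, is $c$ or a proper descendant of $c$; in either case every child of $t$ has subtree size $<\frac n2$ (by the centroid property if $t=c$, and because $S_t<\frac n2$ if $t$ is strictly below $c$). Under $E_2$, from $S'_t\ge\frac n2-\frac1p$ I get $S_t\ge\frac n2-\log n-\frac n8\ge\frac n4$ for large $n$, which is the remaining pseudo-centroid condition. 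The main obstacle is precisely this last step: when $S_c$ is only marginally above $\frac n2$, the estimate $S'_c$ straddles $\frac n2$ with probability bounded away from both $0$ and $1$, so a high-probability success guarantee is impossible; the design choices of lowering the threshold to $\frac n2-\frac1p$ and loosening the target from $\frac n2$ to $\frac n4$ are exactly what convert this unavoidable constant-probability fluctuation into a constant lower bound $\gamma\ge\frac16$ on the per-iteration success probability, which is all the Las-Vegas argument needs.
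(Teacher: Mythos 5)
Your proposal is correct and follows essentially the same route as the paper's proof: the same three events (the random element $x$ landing in $Sub_c$, a binomial-median lower bound giving $S'_c\ge \frac n2-\frac1p$ with probability $\ge\frac12$, and a Chernoff-plus-union bound controlling all estimated sizes), the same independence factoring between $x$ and $V$ to obtain a constant per-iteration success probability, the same argument that under these events the returned element satisfies both pseudo-centroid conditions, and the same per-iteration cost breakdown dominated by the $O(dn)$ pseudo-centroid test. The only deviations are cosmetic: you use a two-sided error event with slack $\frac n8$ and probability $1-o(1)$ where the paper uses a one-sided $\frac n{10}$ bound with probability $\frac34$, and you bound $|L|$ by the maximum unsampled gap ($O(\log^2 n)$) rather than the paper's $O(\log n)$ expected-gap argument --- a slightly weaker but more robust bound that still yields $o(n)$.
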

\label{pseudo-centroid-thm}

\begin{proof}
   Let the centroid be $c$ (this is the unique centroid, not just a pseudo-centroid). We first prove that, for sufficiently large $n$, if all of the following three events occur in an iteration of our algorithm, we get a pseudo-centroid.
    \begin{enumerate}
        \item $x = c$ or $c \succ x$
        \item $S'_c \geq \frac{n}{2} - \frac{1}{p}$
        \item For all $i \in T$, $S'_i \leq S_i + \frac{n}{10}$
    \end{enumerate}
    
    \noindent Assume the iteration returns $c'$. If $c' = c$, we are done. Else, $c \succ c'$. Now, $S'_{c'} \geq \frac{n}{2} - \frac{1}{p}$ (as the binary search returns $c'$) and $S_{c'} \geq S'_{c'} - \frac{n}{10}$ (event 2).  Adding these two gives $S_{c'} \geq \frac{n}{2} - \frac{1}{p} - \frac{n}{10} = \frac{2n}{5} - \log n \geq \frac{n}{4}$ for sufficiently large $n$. Also, $S_{c'} < \frac{n}{2}$ as $c'$ lies strictly inside the subtree of $c$, and all children of $c$ have sizes $< \frac{n}{2}$. So, for every child $r$ of $c'$, $S_r < S_{c'} < \frac{n}{2}$. Thus, $c'$ is a pseudo-centroid.
    
    We will prove that the probability of these 3 events all happening simultaneously is $\geq \frac{1}{8}$ for sufficiently large $n$. So, let us try to get an upper bound on at least one of these not happening. Let $p_1, p_2, p_3$ be the probabilities of events $1, 2, 3$ \textbf{not} happening respectively. We will establish bounds on all three of them.  $\mathbf{p_1 \leq \frac{1}{2}}$ has already been proved before.

    To get an upper bound on $p_2$, note that  $F'_c$ follows a binomial distribution with parameters $S_c \geq \frac{n}{2}$ and $p = \frac{1}{\log n}$. As any median of a binomial distribution with parameters $n$ and $p$ is $\geq \lfloor np \rfloor$~\cite{kaas1980mean}, we have $S_{c'} = F_{c'} / p \geq \lfloor \frac{pn}{2} \rfloor / p \geq \frac{n}{2} - \frac{1}{p} $ with probability atleast $\frac{1}{2}$. Therefore, $\mathbf{p_2 \leq \frac{1}{2}}$.
    
    Now, we will prove that $p_3 \le \frac{1}{4}$ for sufficiently large $n$. Indeed, consider an element $i$, with $k$ elements in its subtree, $u_1, u_2, \ldots u_k$, where $k = S_i$. Let $X_j$ be an indicator variable for whether $u_j$ was chosen in $V$. All $X_j$'s are independent and identically distributed with $\mathbb{E}(X_j) = p$. Also, $F'_i = X_1 + X_2 + \ldots X_k$, and $\mathbb{E}(F'_i) = kp$. Since we want a total error of $\le \frac{np}{10}$ in $F'_i$, we use the Chernoff bound~\cite{chernoff1952measure} with $\mu = kp, \delta = \frac{n}{10 k} \ge \frac{1}{10}$  to get:

    \begin{align*}
    & \displaystyle \mathbb{P} \left(F'_i > kp + \dfrac{np}{10} \right) = \mathbb{P} \left(F'_i > \mu(1 + \delta) \right) \leq \exp \left( -\dfrac{\mu}{3} \delta \min(1, \delta) \right) \leq \exp \left( -\dfrac{kp}{3} \dfrac{n}{10 k} \dfrac{1}{10} \right) = \exp \left(\dfrac{-np}{300} \right)&&\\\nonumber
    \end{align*}

    \noindent But $F'_i = p S'_i$, and $k = S_i$. Therefore, for any $i$,

    \begin{align*}
    & \displaystyle \mathbb{P}\left(S'_i > S_i + \dfrac{n}{10} \right) = \mathbb{P} \left(F'_i > kp + \dfrac{np}{10} \right) \leq \exp \left(\dfrac{-np}{300} \right)&&\\\nonumber
    \end{align*}

    \noindent Using the union bound, $p_3 \leq \displaystyle n \cdot \exp \left(\dfrac{-np}{300} \right) =  n \exp \left(\dfrac{-n}{300 \log n} \right) \leq \frac{1}{4}$ for sufficiently large $n$. 
    
    \vspace{3mm}
    
    Now, let us say an iteration fails if at least one of the events $1, 2, 3$ is false. The probability of failure of an iteration is the sum of the probability of event $1$ not happening and the probability of event 1 happening but at least one of the events 2 and 3 not happening. Noticing that event $1$ is independent of both events $2$ and $3$, and applying union bound,we get that the probability of failure is at max:
    
    
    $$ p_1 + (1-p_1)(p_2 + p_3) \leq  p_1 + \frac{3}{4}(1 - p_1) = \frac{3}{4} + \frac{p_1}{4} \leq \frac{7}{8}$$

\noindent Therefore, we find a pseudo-centroid with probability $\geq \frac{1}{8}$ in every iteration, and hence the expected number of iterations is $O(1)$.

\noindent We now prove that each iteration takes $O(dn)$ time. We divide the iteration into its constituent steps to do so.

\begin{enumerate}
    \item Finding $Y$ clearly takes $O(n)$ time.
    \item Sorting $Z$ takes $O(|Z| \log |Z|$) time, and since $\mathbb{E}(|Z|) \leq \frac{n}{\log n}$, this also takes $O(n)$ time.
    
    \item Finding $S'_j$ takes expected $\mathbb{E}(|V|) = \frac{n}{\log n}$ time. We find $S'_j$ $O(\log n)$ times during the binary search, so the binary search also takes $O(n)$.
    \item Finding the set $L$ clearly takes $|Y| = O(n)$ time, as for each element in $Y$, we query whether it is between two given elements of $Z$.
    \item The expected number of elements in $L$ is $O(\log n)$. This is because the the output $t$ of an iteration is independent of $Z$ and the size of $L$ is equal to the number of elements in $Y$ that lie between the element just below and just above $t$ in $Z$. Hence, the expected size is clearly $O(\log n)$. As a result, expected time to sort $L$ is $\mathbb{E}(|L| \log |L|)$ = $O(\mathbb{E}(|L| \log n))$ = $O(\log n \log n) = O(n)$.
    
    \item We then binary search on the list $L$. In each step of the binary search, we find the estimated subtree size of some element, which takes $O(\frac{n}{\log n})$ time. There are $O(\log n)$ steps, so this step is also $O(n)$.
    
    \item Checking whether $t$ is a pseudo-centroid takes $O(dn)$ time, as described previously.
\end{enumerate}
    
\end{proof}

\begin{lemma} $\texttt{GET-EDGES-OPTIMIZED}$ sorts a tree using $O(dn \log n)$ expected time
\end{lemma}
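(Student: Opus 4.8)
The plan is to observe that \texttt{GET-EDGES-OPTIMIZED} is simply \texttt{GET-EDGES} with the call to \texttt{GET-CENTROID} replaced by a call to \texttt{GET-PSEUDO-CENTROID}, and then to re-run the analysis of the earlier \texttt{GET-EDGES} lemma with the cheaper divide step. Correctness carries over verbatim: the edges added around the separating element $c$ --- namely $(c, \texttt{GET-ROOT}(X))$ for each child subtree $X$ and $(p, c)$ for the parent $p$ of $c$ --- are correct for \emph{any} choice of $c$, since the centroid property was used only to bound subproblem sizes and never in the argument that the returned edge set is exactly the edge set of $H$. Thus I only need to re-examine the running time.

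First I would bound the non-recursive (divide) work at a node on a set $H$ of size $n$. By the previous lemma, \texttt{GET-PSEUDO-CENTROID} returns a pseudo-centroid $c$ in $O(dn)$ expected time; computing $C = Sub_c$ and $Y = H \setminus C$, splitting the children via \texttt{GET-SUBTREES-OF-CHILDREN} in $O(dn)$ time, and adding the $O(d)$ edges around $c$ (each \texttt{GET-ROOT}$(X)$ costing $O(|X|)$, summing to $O(n)$) together contribute $O(dn)$. Hence the expected divide cost is $O(dn)$.

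Next I would record the subproblem structure. The recursive calls are made on the subtrees $X$ of the children of $c$ and on $Y = H \setminus Sub_c$; these sets partition $H \setminus \{c\}$, so their sizes sum to $n - 1 < n$. By the definition of a pseudo-centroid, each child subtree has size $< \frac{n}{2}$ and $|Sub_c| \geq \frac{n}{4}$, so $|Y| \leq \frac{3n}{4}$; thus every subproblem has size at most $\frac{3n}{4}$, and all subproblems retain maximum degree $\leq d$.

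Finally I would invoke Theorem~\ref{Randomised DNC} with $f(n) = dn$. This $f$ is nonnegative and superadditive --- indeed $f(u) + f(v) = du + dv = d(u+v) = f(u+v)$ --- and the subproblems partition the input with each size at most a constant fraction $\frac{3}{4}$ of $n$, so the recursion has depth $O(\log n)$. The theorem then yields $Q_d(n) = O(f(n)\log n) = O(dn\log n)$. The main (and really only) subtlety is that we are composing a randomized divide step inside a recursion whose very shape is itself random; this is exactly what Theorem~\ref{Randomised DNC} abstracts away, so the substance of the argument reduces to checking its hypotheses. Two minor points deserve a remark: the pseudo-centroid guarantee requires $n$ above some constant, so for small $H$ one falls back to a brute-force $O(n^2) = O(1)$ procedure as a base case; and the shrinkage factor relaxing from $\frac{1}{2}$ to $\frac{3}{4}$ only changes the recursion depth by a constant factor and hence does not affect the asymptotics.
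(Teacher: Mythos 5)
Your proposal is correct and follows the paper's own proof essentially step for step: reduce to the analysis of \texttt{GET-EDGES}, bound the divide cost by $O(dn)$ expected time via \texttt{GET-PSEUDO-CENTROID} and \texttt{GET-SUBTREES-OF-CHILDREN}, note that every subproblem has size at most $\frac{3n}{4}$, and apply Theorem~\ref{Randomised DNC} with $f(n) = dn$. Your added remarks on correctness carrying over and on the small-$n$ base case are harmless elaborations of points the paper leaves implicit.
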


\begin{proof} The only difference between \texttt{GET-EDGES} and \texttt{GET-EDGES-OPTIMIZED} is that one uses \texttt{GET-CENTROID} and the other uses \texttt{GET-PSEUDO-CENTROID}. 
    
    Here, each sub-problem has size $\leq \frac{3n}{4}$, and we get the division into sub-problems using $O(d n)$ expected time. The function $f(n) = dn$ satisfies $f(u) \geq 0, f(u) + f(v) \le f(u + v)$ for all $u, v \ge 1$. Hence, using Theorem~\ref{Randomised DNC}, the expected time complexity is $Q_d(n) = O(dn \log n)$.
    
\end{proof}

\begin{algorithm}
    \SetKwInOut{Input}{Input}
    \SetKwInOut{Output}{Output}
    \SetAlgoLined

    \underline{GET-EDGES-OPTIMIZED}\;
    \Input{A set $H\subseteq T$ which is either empty or has exactly one maximal element}
    \Output{The set of the $|H|-1$ edges of $(H, \succ \cap \: (H \times H))$}
    $n \gets |H|$ \\
    \If{$n \leq 1$}{
        \Return{$\{\}$}
    }
    $c \gets $ \texttt{GET-PSEUDO-CENTROID}($H$) \\
    $\chi \gets$  \texttt{GET-SUBTREES-OF-CHILDREN}($H$, $c$) \\
    $C \gets \{c\} \cup \{x \in H \: | \: c \succ x\}$ \\
    $Y \gets H \setminus C$ \\
    $E \gets $ \texttt{GET-EDGES-OPTIMIZED}($Y$) \\
    \For{$X \in \chi$}{
        $E \gets E \: \cup $ \texttt{GET-EDGES-OPTIMIZED}($X$) \\
        $E = E \cup \{(c, \texttt{GET-ROOT}($X$))\}$
    }
    \If{$Y$ is not empty}{
        $p = \min \{y \in Y \; | \; y \succ c\}$ /* min is defined, as the set of ancestors is well ordered */ \\
        $E \gets E \cup \{(p, c)\}$
    }
    \Return{$E$}
\end{algorithm}

\section{Deterministic Algorithm}

\begin{theorem}
There is a deterministic algorithm which sorts a tree of $n$ elements and width $w$ in $O(wn + n \log n)$ worst-case query and time complexity.
\end{theorem}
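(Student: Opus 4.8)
The plan is to reduce tree-sorting, exactly as in the randomized case, to determining the parent $par_e$ of every non-root element $e$, and to solve this with a chain-based data structure in the spirit of the \texttt{ChainMerge} structure of Daskalakis et al.~\cite{DKMRE07}. Since the width of a tree equals its number of minimal elements, Dilworth's theorem guarantees a decomposition of the elements into at most $w$ chains $C_1, \dots, C_w$, each totally ordered by $\succ$. The tree-specific fact I would exploit is that the ancestor set $Anc_e$ of any element is well-ordered, so its intersection with any chain $C_j$ is a \emph{top-segment}: if $c \in C_j$ dominates $e$ then so does every element of $C_j$ above $c$. Hence for each $e$ and each $C_j$ there is a single boundary, the lowest ancestor of $e$ in $C_j$, and $par_e$ is simply the $\succ$-minimum of these at most $w$ boundaries. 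This minimum is well-defined and computable in $O(w)$ comparisons once the boundaries are known, because all ancestors of $e$ are mutually comparable in a tree.

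The algorithm I would write inserts the elements one at a time, maintaining (i) a decomposition of the processed elements into at most $w$ chains, (ii) the sorted order within each chain, and (iii) for every processed element its boundary in every chain. To insert $e$ I would locate its boundary in its own chain by binary search, at cost $O(\log n)$, append it to the chain it extends or open a fresh chain (a fresh chain is needed only when $e$ initiates a new downward branch, and since there are $w$ leaves the chain count stays at most $w$), and then read off $par_e$ by the $O(w)$ minimization above. The boundaries of $e$ in the \emph{other} chains I would obtain not by independent binary searches but by a monotone argument: processing the elements of a chain $C_i$ in increasing $\succ$-order, the top-segment of any other chain $C_j$ that dominates the current $C_i$-element only shrinks, and the descendant-segment only grows, so a persistent pointer for the ordered pair $(C_i, C_j)$ advances monotonically. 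Over the whole execution such a pointer moves $O(|C_i|+|C_j|)$ times, and summing $\sum_{i,j}(|C_i|+|C_j|)$ over the at most $w$ chains telescopes to $O(wn)$.

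The hard part, and the entire reason the tree structure is needed, is confining the logarithmic factor to the correct term so as to reach $O(wn + n\log n)$ rather than the $O(wn\log n)$ produced by binary searching each of the $w$ chains per element. I would charge the $\log n$ factor \emph{only} to placing each element within its own chain, contributing $O(n\log n)$ in total, and charge all width-dependent, cross-chain work to the monotone pointers, contributing $O(wn)$; this separation is exactly what the tree property (ancestors forming a chain, boundaries being top-segments that move monotonically) supplies and what fails for general posets. The main obstacle I anticipate is the coordination between \emph{discovering} the chain decomposition online and \emph{maintaining} the monotone-pointer invariants: one must insert in an order compatible with ascending each chain without knowing the chains in advance, verify that opening or merging chains never resets a pointer backward or forces recomputation, and confirm that these structural updates cost $O(wn)$ in aggregate. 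Once correct, each comparison and bookkeeping operation is $O(1)$, so the time complexity matches the query complexity; the final $O(1)$-query data structure is then produced by the depth-first-search procedure already described, giving total query and time complexity $O(wn + n\log n)$.
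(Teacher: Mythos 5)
Your reduction to finding parents, the decomposition into at most $w$ chains, and the tree-specific observation that the ancestors of any element meet each chain in a top-segment (so that $par_e$ is the $\succ$-minimum of at most $w$ per-chain boundaries, computable in $O(w)$ comparisons) are all sound. The genuine gap is exactly the point you flag as ``the main obstacle'' and then never resolve: the $O(wn)$ amortization of cross-chain work via monotone pointers requires visiting the elements of each chain $C_i$ in $\succ$-order, while your algorithm inserts elements online, one at a time, in an order it does not control, and discovers the chains as it goes. These requirements conflict, and the conflict is fatal rather than cosmetic. Concretely, take a root with two long chains $A$ and $B$ below it (so $w = 2$), and let the adversary deliver the elements of $A$ alternating between its top half and its bottom half. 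Every inserted element needs its boundary in $B$ (its true parent may live in $B$, so this cannot be skipped), and that boundary alternates between the top and bottom of $B$; a single persistent pointer for the pair $(A,B)$ then moves $\Theta(n)$ per insertion, for $\Theta(n^2)$ total, so the telescoping bound $\sum_{i,j}(|C_i|+|C_j|) = O(wn)$ simply does not apply to out-of-order arrivals. Falling back on per-chain binary searches costs $O(wn\log n)$, missing the target. A second unproven step is the claim that greedy placement never opens more than $w$ chains under arbitrary arrival order: for general posets greedy online chain partitioning can be forced well above width many chains, so a tree-specific argument tied to your placement rule is required, and none is given.

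For comparison, the paper's proof sidesteps the online coordination problem entirely by choosing the processing order itself: repeatedly find a minimal element $x$ of the remaining set $H$ in $O(n)$ time, collect its \emph{entire} remaining ancestor set $Anc_x \cap H$ (a chain) in $O(n)$ time, sort it in $O(|R|\log|R|)$ time, record the parent of each chain element as its successor in the chain, and delete the chain. Because every element of $H$ dominating a deleted element is deleted with it, each minimal element found is minimal in the original tree, so there are exactly $w$ rounds; moreover the already-deleted ancestors of $x$ always form a top-segment of its ancestor chain, so consecutive chain elements really are parent--child pairs in $\tau$, and only the $w$ chain tops need a final $O(n)$ scan each for their parents. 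This yields $O(wn + n\log n)$ with no amortization subtleties. If you want to salvage your incremental design, you must either control the insertion order --- which essentially collapses into the paper's peeling strategy --- or find an amortization argument that survives adversarial arrival order; as written, both width-dependent cost claims in your proposal are open.
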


\noindent We briefly describe a deterministic algorithm \texttt{GET-EDGES-DET} which runs in worst-case total complexity of $O(wn + n \log n)$. The pseudocode of the algorithm is included in Appendix~\ref{Appendix B}.





We can find a single minimal element in a set of elements $H \subseteq T$ in $O(n)$ time by using a procedure similar to \texttt{GET-ROOT} presented earlier. Let $H = \{h_1, h_2, \dots, h_n \}$. We initialize $x \gets h_1$. Then, we iterate through all other $h_i$, setting $x \gets h_i$, if $x \succ h_i$. At the end of this procedure, $x$ is a minimal element.

Our algorithm starts with $H = T$. After we find a minimal element $x \in  H$, we can find the set of ancestors of $x$, $Anc_x$, in $O(n)$ time. As $Anc_x$ is a total order, we can sort it in $O(n \log n)$ time. We may then find the parent of each element in the chain $R = Anc_x \cup \{x\}$ except the maximal element, and add the chain to a chain decomposition $\mathcal{C}$. We set $H \gets H \setminus R$, and repeat until $H$ is empty.

The preceding method successfully divides the tree $\tau = (T, \succ)$ into $|\mathcal{C}|$ disjoint chains such that $par_x$ is defined for every element $x$ on a chain except the maximal element. We therefore notice that there are only $|\mathcal{C}|$ elements for which $par_x$ is not defined. Hence, for each of these elements $e$, we use $O(n)$ time to find the minimal element which dominates $e$ (if one exists), which is the parent of $e$.

Finally, for all elements $t \in T$, we construct the edge $(par_t,t)$ if $t$ is not a maximal element.

\begin{lemma}
The algorithm \texttt{GET-EDGES-DET}($T$\!) correctly outputs the edges of $T$ in $O(wn + n \log n)$ time.
\end{lemma}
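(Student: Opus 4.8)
The plan is to split the argument into a correctness part and a running-time part, both of which rest on a single structural invariant describing how the greedy chain extraction interacts with the ancestor relation. The invariant I would establish first is: \emph{once an element's parent has been deleted from $H$, all of that element's ancestors have been deleted as well.} To see this, suppose $par_v$ is removed in the iteration that extracts the chain $R = \{s \in H : s \succ x\} \cup \{x\}$, where $x$ is the minimal element chosen in that iteration. Since $par_v \in R$, it is either $x$ itself or an ancestor of $x$. It cannot equal $x$, because $x$ is minimal in $H$ while $v$, a child of $par_v$, still lies in $H$; hence $par_v \succ x$. But then every ancestor of $par_v$ that still lies in $H$ is also an ancestor of $x$, so it too lies in $R$ and is deleted. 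Consequently, after this iteration none of $v$'s ancestors remain in $H$.

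Using this invariant I would argue correctness of the parent assignment. When $v$ is processed as a non-top element of its chain, $v$ has some ancestor present in $H$; by the invariant $par_v$ must then still be in $H$ as well, and since $par_v \succ v$ with no element strictly between them, $par_v$ is exactly the element immediately above $v$ in the sorted chain. Thus assigning the chain-successor as the parent is correct for every non-top element. For the $|\mathcal{C}|$ top elements, the parent (if any) is recovered by the explicit search for the minimal element dominating them, and the unique global root correctly receives no parent. Together these assignments produce all $n-1$ edges $(par_t, t)$.

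For the running time, the crucial quantity is $|\mathcal{C}|$, which controls both the number of loop iterations and the number of explicit top-parent searches; I would show $|\mathcal{C}| = w$. First, every minimal element of $H$ is in fact a leaf (minimal element) of $\tau$: if a non-leaf $m$ were minimal in $H$, all its descendants would already be deleted, yet by the invariant's reasoning any descendant of $m$ can only be deleted in a chain that simultaneously sweeps up $m$ (being an ancestor of that chain's bottom), contradicting $m \in H$. Hence the bottom of each extracted chain is a leaf of $\tau$, while its interior elements, dominating that bottom, are non-leaves; as the chains partition $T$, each chain contains exactly one leaf. Therefore $|\mathcal{C}|$ equals the number of leaves of $\tau$, which by the width characterization for trees is $w$. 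Tallying the costs then gives: $O(n)$ per iteration to find a minimal element and its ancestor chain, summing to $O(wn)$ over the $w$ iterations; $\sum_{R} O(|R|\log|R|) = O(n\log n)$ for sorting the chains, since they are disjoint with total size $n$; $O(wn)$ for the $w$ explicit top-parent searches; and $O(n)$ to build the edges, for a total of $O(wn + n\log n)$.

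The step I expect to be the main obstacle is the ancestor-sweep invariant together with its dual use, since it must simultaneously justify that chain-successors are genuine parents and pin down $|\mathcal{C}| = w$. The subtlety worth stressing is that an element's parent can be deleted in an earlier chain that descends into a \emph{sibling} subtree; the invariant is precisely what rules out the dangerous outcome in which such an element later acquires a spurious parent, forcing it instead to surface as a chain top that is handled by the explicit search.
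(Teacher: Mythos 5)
Your proof is correct and follows essentially the same route as the paper's: the same greedy chain extraction, the same key structural fact (no element remaining in $H$ dominates a deleted chain element, so every chain bottom is a minimal element of $\tau$ and hence $|\mathcal{C}| = w$), and the same cost accounting of $O(wn)$ for the $w$ extractions and top-parent searches plus $O(n\log n)$ for sorting the disjoint chains. The only difference is one of completeness: your ancestor-sweep invariant explicitly justifies that chain successors are genuine parents, a correctness point the paper asserts in its algorithm description without detailed argument.
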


\begin{proof}
Finding a single minimal element clearly takes $O(n)$ time.  Additionally, the width of $\tau$ is $w$. Therefore, there are exactly $w$ minimal elements in $T$. We observe that each time we find a minimal element of $H$, said element is also a minimal element of $T$. This is because, when we remove a chain $C$ from the tree, no element in $H \setminus C$ succeeds an element in $C$. Therefore, any minimal element in $H \setminus C$ was also a minimal element in $H$. Thus, we make $w$ iterations, which takes $O(wn)$ time. Moreover, $|\mathcal{C}| = w$ as a result.

Finding the minimal element which dominates an element $e$ takes $O(n)$ time. We repeat this $|\mathcal{C}|$ times, once for each chain in the chain decomposition. Thus, this step also takes $O(wn)$ time. Finally, the total complexity of sorting all chains is $O(n \log n)$ as the chains are disjoint. This results in the final time complexity of $O(wn + n \log n)$.

We comment that during our iteration, it is possible that removing elements causes the tree to have multiple roots. However, note that this does not affect the correctness or complexity of our algorithm.
\end{proof}

\section{Lower Bounds}
\label{sec:low}

\begin{theorem}
\label{lb1}
The worst-case expected query complexity of the tree sorting problem on a tree of $n$ elements with maximum degree $d$ is $\Omega(dn + n \log n)$.
\end{theorem}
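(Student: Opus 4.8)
The bound is a sum of two terms, so the plan is to exhibit two separate families of degree-$\le d$ trees, one forcing $\Omega(n \log n)$ queries and one forcing $\Omega(dn)$ queries; since the worst case dominates each and $\max(a,b) \ge \tfrac{1}{2}(a+b)$, this yields $\Omega(dn + n\log n)$. The $\Omega(n\log n)$ term comes essentially for free from the chain (a total order), which has maximum degree $1 \le d$: on a chain the oracle never answers $||$, so a comparison behaves like an ordinary binary comparison, and distinguishing the $n!$ possible orderings requires a decision tree of depth $\Omega(\log n!) = \Omega(n \log n)$. The standard randomized version of this bound (equivalently, Yao's minimax principle against a uniformly random permutation) gives the same $\Omega(n\log n)$ lower bound on expected queries.

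The interesting term is $\Omega(dn)$, which I would prove with Yao's minimax principle: I will fix a distribution over degree-$\le d$ trees and lower-bound the expected number of queries of any \emph{deterministic} algorithm on it. The gadget is a depth-two ``broom'': a hub $h$ with $d$ mutually incomparable children $v_1,\dots,v_d$, with $\Theta(d)$ leaves hung under the $v_i$ so that each $v_i$ receives at most $d$ of them (keeping every degree $\le d$); the parent of each leaf is chosen essentially uniformly and independently among $v_1,\dots,v_d$. The key structural point is that the $v_i$ form an antichain, so the only informative query about a leaf $\ell$ is $\mathcal{Q}(\ell, v_i)$, which returns $\prec$ exactly when $v_i = par_\ell$ and $||$ otherwise; there is no order among the candidates to exploit by binary search, so identifying $par_\ell$ is a linear search among $d$ equally likely, mutually incomparable candidates. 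A search that must locate a uniformly random target among $d$ alternatives, learning only ``is it this one'' per query, has expected cost $\Theta(d)$, and these queries are useful for $\ell$ alone (learning $par_\ell$ reveals nothing about $par_{\ell'}$). By linearity of expectation over the $\Theta(d^2)$ leaves of a gadget, the gadget costs $\Omega(d^3)$ in expectation; tiling $\Theta(n/d^2)$ such gadgets along a spine into one tree of maximum degree $O(d)$ gives a tree on $\Theta(n)$ nodes with expected cost $\Omega(dn)$, and Yao's principle transfers this to the expected query complexity of every randomized algorithm.

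The step I expect to be the main obstacle is respecting the degree bound while still forcing $\Omega(d)$ queries per element. This is exactly why the candidate parents must be kept incomparable: had I instead hung each leaf below a long chain, the algorithm could binary-search its $\Theta(d)$ ancestors in $O(\log n)$ queries, destroying the bound. Keeping the candidates as an antichain caps the number of leaves sharing a parent at $d$, which is what dictates the $\Theta(d^2)$-node gadget and the tiling. I also need to handle the mild correlation introduced by the per-$v_i$ capacity (each $v_i$ holds at most $d$ leaves) --- either by a concentration argument showing that uniform independent assignment rarely violates the cap, or by analyzing a uniformly random balanced assignment directly --- and to check the boundary regimes of very small or very large $d$ separately, where a single gadget or the chain bound already dominates.
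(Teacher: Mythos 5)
Your proposal is correct in outline but proves the $\Omega(dn)$ term by a genuinely different mechanism than the paper. The $\Omega(n\log n)$ part coincides (a chain is a tree of maximum degree $1 \le d$, classical sorting bound, combine with a max). For $\Omega(dn)$, the paper uses no distribution and no Yao: it fixes a \emph{single} explicit tree --- hubs spaced every $d$ positions along a chain, the parent of $t_i$ being $t_{d\lfloor (i-2)/d\rfloor+1}$, so each hub has $d-1$ leaf children --- and argues indistinguishability: if a correct algorithm never queries some pair $(x,y)$ of sibling leaves, the tree $\tau'$ obtained by re-hanging $y$ as a child of $x$ answers every asked query identically, so even a Las Vegas algorithm cannot be correct with probability $1$. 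This exhibits an explicit mandatory query set of size $\lfloor\frac{n-1}{d}\rfloor\binom{d-1}{2}=\Omega(dn)$ on one fixed input, with zero distributional bookkeeping. Your route randomizes the attachment of $\Theta(d^2)$ leaves per gadget to an antichain of $d$ candidate parents and argues an expected-$\Omega(d)$ unstructured search per leaf; the shared insight is the same (the candidate parents must be mutually incomparable so that binary search is impossible --- your ``main obstacle'' remark is exactly the paper's design principle), but note the informative queries differ: in your support only leaf-to-candidate queries matter, whereas the paper's perturbation $\tau'$ lives outside such a support and forces leaf-to-leaf sibling queries. You pay for the distributional route precisely where you flag it: the capacity cap correlates assignments, so you must either condition on a rare overflow event under independent assignment or analyze a balanced assignment in which a filled candidate leaks information across leaves, plus a separate argument for small $d$; none of this arises in the paper's argument. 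What your approach buys in exchange is strength: a Yao-style distributional bound extends naturally to bounded-error (Monte Carlo) algorithms, which the mandatory-query-set argument does not directly give, since a Monte Carlo algorithm may skip mandatory queries and still succeed with probability $2/3$. For the theorem as stated (worst-case expected complexity of zero-error algorithms), the paper's fixed-tree adversary is the shorter and cleaner path; your version is worth completing only if the stronger bounded-error statement is the goal.
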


\begin{proof}
We consider a tree $\tau = (T,\succ)$ on $n$ elements with the following form. Let $T = \{t_1, \dots, t_n\}$. There is a single maximal element $t_1$. In addition, each element $t_i$ ($i>1$) has parent $t_{d \lfloor (i-2)/d \rfloor + 1}$.

We assume that the algorithm has not explicitly queried a pair of elements $(x,y)$ where $par_x = par_y = p$ and $|ch_x| = |ch_y| = 0$. We consider the tree $\tau^{\prime} = (T,\succ^{\prime})$ where $\succ^{\prime} = (\succ \cup \: \{(x, y)\})$. In graph-theoretic terms, $\tau^{\prime}$ is identical to $\tau$ except that the parent of $y$ is $x$ instead of $p$. We note that the answers to queries to $\tau^{\prime}$ are the same as the answers to queries to $\tau$ except for the query $(x,y)$. Therefore, the algorithm is unable to distinguish $\tau$ from $\tau^{\prime}$ without explicitly querying $(x,y)$. There are at least $\lfloor \frac{n-1}{d} \rfloor \binom{d - 1}{2} = \Omega(dn)$ unordered pairs $(x,y)$ which satisfy the condition, thus implying a lower bound on the query complexity of sorting a tree of $\Omega(dn)$.

Sorting a total order has an average case query complexity of $\Omega(n \log n)$~\cite{cormen2022introduction}. Since a total order is a tree with maximum degree $1$, this also implies a lower bound of $\Omega(n \log n)$ for the query complexity of sorting a tree. Combining this result with the previous result gives the lower bound of $\Omega(dn + n \log n)$.

We note that this proof holds for any Las Vegas randomized algorithm as well as deterministic algorithms, since we provide an explicit minimal set of queries which the algorithm is required to make. If any query in this set is not made, the algorithm cannot determine the correct answer with probability $1$.
\end{proof}

\noindent We now consider a more fundamental problem of recognizing whether a given poset is a tree poset by asking comparision queries. We prove the lower bound of $\Omega(n^2)$ for this problem.

\begin{theorem}
The expected worst-case query complexity of recoginizing whether a given poset $\tau = (T, \succ)$ is a tree is $\Omega(n^2)$
\end{theorem}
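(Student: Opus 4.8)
The plan is to prove the bound with an indistinguishability (adversary) argument built on a pair of ``twin'' posets that agree on every queried comparison but differ in whether they are trees, so that recognizing a tree forces the algorithm to probe nearly all relevant pairs.

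\emph{Construction.} Partition the $n$ elements into $A = \{a_1, \dots, a_k\}$ and $B = \{b_1, \dots, b_k\}$ with $n = 2k$. The base poset $\mathcal{T}_0$ makes all $n$ elements mutually incomparable (a pure antichain); every ancestor set is empty, so $Anc_t$ is trivially well-ordered and $\mathcal{T}_0$ is a tree. For a column index $j$ and distinct $i, i'$, let $\mathcal{N}_{j,i,i'}$ be obtained from $\mathcal{T}_0$ by adding exactly the two relations $a_i \succ b_j$ and $a_{i'} \succ b_j$. Since $a_i$ and $a_{i'}$ are incomparable, the ancestor set $Anc_{b_j} = \{a_i, a_{i'}\}$ is not well-ordered, so $\mathcal{N}_{j,i,i'}$ is \emph{not} a tree. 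I would first check that both are valid posets and that $\mathcal{N}_{j,i,i'}$ differs from $\mathcal{T}_0$ only in those two pairs: transitivity forces nothing else because no element lies below $b_j$ or above $a_i, a_{i'}$.

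\emph{Deterministic and zero-error bound.} The adversary answers every query with $||$, a transcript consistent with $\mathcal{T}_0$. I claim that if, upon halting, the algorithm has left two pairs $(a_i,b_j)$ and $(a_{i'},b_j)$ in a common column $j$ unqueried, it cannot be correct: flipping precisely those two pairs produces $\mathcal{N}_{j,i,i'}$, which is also consistent with the all-$||$ transcript yet is not a tree, so the algorithm returns an identical (hence wrong on one world) answer. Thus for every column $j$ at most one of the $k$ pairs $(a_\cdot, b_j)$ may remain unqueried, forcing at least $k(k-1) = \Omega(n^2)$ queries. This settles the claim for deterministic and Las Vegas algorithms.

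\emph{Randomized bound via Yao's principle.} Let the hard distribution $\mathcal{D}$ output $\mathcal{T}_0$ with probability $\tfrac12$, and otherwise output $\mathcal{N}_{j,i,i'}$ with $j$ uniform in $[k]$ and $\{i,i'\}$ uniform among unordered pairs. The key observation is that on any $\mathcal{N}_{j,i,i'}$ a deterministic algorithm follows the \emph{same} query path as on $\mathcal{T}_0$ until it first queries one of the two planted pairs; if it never does, it outputs ``tree'' and errs. Writing $S'$ for the fixed set of $(a_\cdot,b_\cdot)$-queries made on the all-$||$ input and $c_j$ for the number of these in column $j$, the error conditioned on the non-tree branch is
\[
\tfrac1k \sum_{j=1}^{k} \binom{k-c_j}{2}\Big/\binom{k}{2} \;\geq\; \left(1 - |S'|/k^2\right)^2,
\]
by convexity of $c \mapsto (1-c/k)^2$ and Jensen's inequality (using $\sum_j c_j = |S'|$). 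Bounding the total error below a constant therefore forces $|S'| = \Omega(k^2)$; since the algorithm makes $|S'|$ queries deterministically on the probability-$\tfrac12$ tree input, its expected cost under $\mathcal{D}$ is $\Omega(n^2)$. Yao's minimax principle transfers this to the worst-case expected query complexity of any randomized algorithm.

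I expect the deterministic indistinguishability core to be routine; the delicate step is the randomized extension, namely setting up $\mathcal{D}$ so that a query budget of $o(k^2)$ provably entails constant error. This rests on the observation that a planted instance's transcript coincides with the tree transcript until a planted edge is hit, combined with the Jensen-type averaging over columns needed to convert ``few queries'' into ``large error.''
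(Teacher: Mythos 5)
Your proof is correct in substance but takes a genuinely different route from the paper's. The paper perturbs a fixed caterpillar-shaped binary tree (the degree-$2$ instance from Theorem~\ref{lb1}): for two leaves $x, y$ with $Anc_x \subset Anc_y$ it adds the \emph{single} relation $x \succ y$, verifies by a short case analysis that the perturbed relation is still transitive (hence a poset), and observes the result is not a tree because $y$ acquires two incomparable ancestors $x$ and $par_y$; indistinguishability on the $\binom{|L|}{2} = \Omega(n^2)$ leaf pairs then forces the bound. You instead plant \emph{two} covering relations $a_i \succ b_j$ and $a_{i'} \succ b_j$ over an antichain, which makes poset-validity immediate (nothing lies below $b_j$ or above $a_i, a_{i'}$, so transitivity is vacuous) at the cost of the two worlds differing in two pairs rather than one — a cost you correctly absorb with the ``at most one unqueried pair per column'' counting, giving $k(k-1) = \Omega(n^2)$. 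Moreover, your Yao-based extension to bounded-error (Monte Carlo) algorithms is strictly stronger than what the paper establishes: the paper's argument, like the first half of yours, covers only deterministic and Las Vegas algorithms, which must be correct with probability $1$.

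Two small repairs are needed. First, your base instance $\mathcal{T}_0$ is a tree only under the paper's formal definition, which permits multiple maximal elements; since the paper elsewhere adopts the convention of a single root, it is safer (and costs nothing) to hang all of $A \cup B$ below one new root $r$: the resulting star is a rooted tree, the planted instance remains a valid poset (the relation $r \succ b_j$ required by transitivity is already present) and is still not a tree, and the counting is unchanged. Second, your displayed inequality is not literally true: in fact $\binom{k-c}{2}/\binom{k}{2} \leq (1-c/k)^2$, with the inequality pointing the wrong way for your purposes. What does hold is $\binom{k-c}{2}/\binom{k}{2} \geq \left(1 - \frac{c+1}{k}\right)^2$, and convexity plus Jensen then gives a conditional error of at least $\left(1 - \frac{|S'|}{k^2} - \frac{1}{k}\right)^2$, which suffices for the same conclusion that bounded error forces $|S'| = \Omega(k^2)$ and hence expected cost $\Omega(n^2)$ under your distribution.
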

\begin{proof}
We consider the same tree $\tau$ as used for the proof of Theorem~\ref{lb1}, with $d = 2$. Let $L$ be the set of minimal elements of $\tau$. Note that $|L| = \Theta(n)$. We assume that the algorithm has not explicitly queried a pair of elements $(x,y)$ where $x \in L, y \in L, |Anc_x| < |Anc_y|$. Note that this implies $Anc_x \subset Anc_y$ according to the structure of our tree.  We consider $\tau^{\prime} = (T,\succ^{\prime})$ where $\succ^{\prime} = (\succ \cup \: \{(x, y)\})$ and prove that $\tau^{\prime}$ is a poset. 

Note that all properties of a poset are immediately satisfied except transitivity. Assume that $\succ^{\prime}$ is not transitive. Say $a \succ^{\prime} b \succ^{\prime} c$, but $(a, c)  \notin \: \succ^{\prime}$. Note that either $(a, b) = (x, y)$, or $(b, c) = (x, y)$, else $(a, b)$ and $(b, c)$ were also in $\succ$, implying $(a, c) \in \: \succ$, and hence $(a, c) \in \: \succ^{\prime}$. Also, $(a, b) \neq (x, y)$, as $y$ is a minimal element, contradicting $b \succ^{\prime} c$. Thus $(b, c) = (x, y)$, but then $a \in Anc_x \subset Anc_y$, meaning $a \succ c$, and hence $a \succ^{\prime} c$, which contradicts our assumption.

Let $par_y = p$ in $\tau$. Clearly, $\tau^{\prime}$ is not a tree because $p \succ^{\prime} y$ and $x \succ^{\prime} y$, but $p$ and $x$ are incomparable. We note that the answers of queries to $\tau^{\prime}$ are the same as the answers to  $\tau$ except for the query $(x,y)$. Therefore, the algorithm is unable to distinguish $\tau$ from $\tau^{\prime}$. Since $\tau$ is a tree and $\tau^{\prime}$ is not a tree, the algorithm is therefore unable to determine whether the poset described by the oracle is a tree or not without explicitly querying $(x,y)$. There are at least $\binom{|L|}{2} - 1 = \Omega(n^2)$ queries of this form.

Similar to Theorem~\ref{lb1}, this proof holds for any Las Vegas randomized algorithm.
\end{proof}

\noindent Informally, this result shows that the initial information that $\tau$ is a tree (and not merely a general poset) is necessary for an algorithm to sort $\tau$ efficiently.

\section{Concluding Remarks}

In this paper, we have considered the problem of sorting in trees, a particular case of partial orders. Existing algorithms for sorting in posets perform poorly in trees because their complexity bounds are dependent on the width of the partial order, which is expected to be quite large for trees. We presented a  randomized algorithm for sorting a tree poset in worst-case expected $O(dn\log n)$ query and time complexity, where $d$ is the maximum degree of the tree. This is optimal for trees of bounded degree. We also provided a new lower bound of $\Omega(dn + n \log n)$ for the worst-case query complexity of any randomized or deterministic algorithm for sorting a tree poset. We presented the first deterministic algorithm for sorting tree posets with a lower time complexity of $O(wn + n \log n)$ (where $w$ is the width of the tree) than existing algorithms for sorting general posets. Finally, we showed that the recognition of a tree poset takes at least quadratic time.

\subsubsection{Acknowledgements} We would like to thank Diptarka Chakraborty (National University of Singapore) for providing advice on this work.


\bibliographystyle{plain}
\bibliography{ref}

\begin{thebibliography}{10}

\bibitem{afshar2020reconstructing}
Ramtin Afshar, Michael~T Goodrich, Pedro Matias, and Martha~C Osegueda.
\newblock Reconstructing binary trees in parallel.
\newblock In {\em Proceedings of the 32nd ACM Symposium on Parallelism in
  Algorithms and Architectures}, pages 491--492, 2020.

\bibitem{afshar2021parallel}
Ramtin Afshar, Michael~T Goodrich, Pedro Matias, and Martha~C Osegueda.
\newblock Parallel network mapping algorithms.
\newblock In {\em Proceedings of the 33rd ACM Symposium on Parallelism in
  Algorithms and Architectures}, pages 410--413, 2021.

\bibitem{alon1994matching}
Noga Alon, Manuel Blum, Amos Fiat, Sampath Kannan, Moni Naor, and Rafail
  Ostrovsky.
\newblock Matching nuts and bolts.
\newblock In {\em SODA}, pages 690--696. Citeseer, 1994.

\bibitem{banerjee2016sorting}
Indranil Banerjee and Dana Richards.
\newblock Sorting under forbidden comparisons.
\newblock In {\em 15th Scandinavian Symposium and Workshops on Algorithm Theory
  (SWAT 2016)}. Schloss Dagstuhl-Leibniz-Zentrum fuer Informatik, 2016.

\bibitem{biswas2017improved}
Arindam Biswas, Varunkumar Jayapaul, and Venkatesh Raman.
\newblock Improved bounds for poset sorting in the forbidden-comparison regime.
\newblock In {\em Conference on Algorithms and Discrete Applied Mathematics},
  pages 50--59. Springer, 2017.

\bibitem{cardinal2013generalized}
Jean Cardinal and Samuel Fiorini.
\newblock On generalized comparison-based sorting problems.
\newblock In {\em Space-Efficient Data Structures, Streams, and Algorithms},
  pages 164--175. Springer, 2013.

\bibitem{chernoff1952measure}
Herman Chernoff.
\newblock A measure of asymptotic efficiency for tests of a hypothesis based on
  the sum of observations.
\newblock {\em The Annals of Mathematical Statistics}, pages 493--507, 1952.

\bibitem{cormen2022introduction}
Thomas~H Cormen, Charles~E Leiserson, Ronald~L Rivest, and Clifford Stein.
\newblock {\em Introduction to algorithms}.
\newblock MIT press, 2022.

\bibitem{DKMRE07}
C~Daskalakis, RM~Karp, E~Mossel, SJ~Riesenfeld, and E~Verbin.
\newblock Sorting and selection in posets.
\newblock {\em SIAM Journal on Computing}, 40(3), 2011.

\bibitem{Faigle}
U~Faigle and Gy~Turan.
\newblock Sorting and recognition problems for ordered sets.
\newblock {\em SIAM Journal on Computing}, 17(1), 1988.

\bibitem{Heeringa}
Brent Heeringa, Marius Cătălin~Iordan, and Louis Theran.
\newblock Searching in dynamic tree-like partial orders.
\newblock In {\em 12th international conference on Algorithms and data
  structures (WADS)}, 2011.

\bibitem{huang2011algorithms}
Zhiyi Huang, Sampath Kannan, and Sanjeev Khanna.
\newblock Algorithms for the generalized sorting problem.
\newblock In {\em 2011 IEEE 52nd Annual Symposium on Foundations of Computer
  Science (FOCS)}, pages 738--747. IEEE, 2011.

\bibitem{kaas1980mean}
Rob Kaas and Jan~M Buhrman.
\newblock Mean, median and mode in binomial distributions.
\newblock {\em Statistica Neerlandica}, 34(1):13--18, 1980.

\bibitem{kuszmaul2022stochastic}
William Kuszmaul and Shyam Narayanan.
\newblock Stochastic and worst-case generalized sorting revisited.
\newblock In {\em 2021 IEEE 62nd Annual Symposium on Foundations of Computer
  Science (FOCS)}, pages 1056--1067. IEEE, 2022.

\bibitem{treesearch}
K.~Onak and P.~Parys.
\newblock Generalization of binary search: Searching in trees and forest-like
  partial orders.
\newblock In {\em IEEE Symposium on Foundations of Computer Science (FOCS)},
  2006.

\bibitem{wang2019reconstructing}
Zhaosen Wang and Jean Honorio.
\newblock Reconstructing a bounded-degree directed tree using path queries.
\newblock In {\em 2019 57th Annual Allerton Conference on Communication,
  Control, and Computing (Allerton)}, pages 506--513. IEEE, 2019.

\end{thebibliography}

\appendix

\section{Appendix: Theorems}

\noindent We present the proof of Theorem~\ref{Randomised DNC}, which is used in the complexity analysis of the algorithms \texttt{GET-EDGES} and \texttt{GET-EDGES-OPTIMIZED}.

\begin{theorem}\label{Randomised DNC} Consider a randomized divide and conquer algorithm, that divides an instance $I$ of size $n$ into sub-problems of size $\leq c n$ for some constant $0 < c < 1$ in expected $O(f(n))$ time, such that the sum of sizes of sub-problems is $\le n$, for some $f:\mathbb{N} \rightarrow \mathbb Z_{\ge 0}$, which satisfies $f(u) + f(v) \leq f(u + v)$ for all $u, v \ge 1$. The expected worst case time complexity of the algorithm is $O(f(n) \log n)$
\end{theorem}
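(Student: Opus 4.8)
The plan is to set up a recurrence for the worst-case \emph{expected} running time and solve it by induction on $n$, with the superadditivity of $f$ doing the essential work of keeping the per-level cost bounded. Define $Q(n) := \sup_{|I| = n} \mathbb{E}[T(I)]$, where $T(I)$ is the (random) running time of the algorithm on instance $I$. Let $A$ be a constant such that dividing any instance of size $m$ costs expected time at most $A f(m)$, and recall the guarantees: on input $I$ the algorithm incurs division time $D(I)$ and then recurses on sub-problems $I_1,\dots,I_m$ with sizes $n_1,\dots,n_m$ that (with probability $1$) satisfy $n_j \le cn$ and $\sum_j n_j \le n$. Thus $T(I) = D(I) + \sum_j T(I_j)$.

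First I would dispose of the randomness, which is the reason one cannot simply write down a deterministic recurrence: both $D(I)$ and the partition $(I_1,\dots,I_m)$ are random. I use the law of total expectation, conditioning on the realized partition. Since each recursive call on $I_j$ draws fresh independent randomness, $\mathbb{E}[T(I_j)\mid I_1,\dots,I_m] \le Q(n_j)$, and by linearity of expectation
\[
\mathbb{E}[T(I)] \;\le\; A f(n) + \mathbb{E}\!\left[\textstyle\sum_j Q(n_j)\right].
\]
Next I would invoke the inductive hypothesis $Q(k) \le K f(k)\log k$ for all $k < n$ (with $K$ a constant to be fixed) together with the two structural facts. Because $n_j \le cn$ we have $\log n_j \le \log n - \log(1/c)$, and because $f$ is superadditive (hence nondecreasing, as $f \ge 0$) a short induction gives $\sum_j f(n_j) \le f\!\left(\sum_j n_j\right) \le f(n)$. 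Combining these inside the expectation,
\[
\textstyle\sum_j Q(n_j) \le K\bigl(\log n - \log(1/c)\bigr)\sum_j f(n_j) \le K f(n)\log n - K f(n)\log(1/c),
\]
so that $\mathbb{E}[T(I)] \le K f(n)\log n + \bigl(A - K\log(1/c)\bigr)f(n)$. Since $0 < c < 1$ makes $\log(1/c)$ a positive constant, choosing $K \ge A/\log(1/c)$ renders the second term nonpositive and closes the induction with $Q(n) \le K f(n)\log n = O(f(n)\log n)$, as claimed.

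The remaining work is boundary bookkeeping, and I expect it to be the main obstacle rather than the algebra above. Instances of size $O(1)$ are base cases costing $O(1)$, which is \emph{not} captured by $K f(1)\log 1 = 0$, so the bare induction leaks an $O(1)$ per leaf; moreover the number of sub-problems $m$ is unbounded in general, so one cannot simply carry an additive constant through the recurrence. The clean remedy is to recast the same estimate level-by-level: the recursion tree has depth $O(\log_{1/c} n) = O(\log n)$ because sizes shrink by a factor $c$ per level, and at each fixed level the sub-problem sizes sum to $\le n$, so by superadditivity the total expected division cost at that level is $\le A f(n)$, giving $O(f(n)\log n)$ overall; the $O(1)$ per-node overhead is then dominated because a division that reads an instance of size $n$ already takes $\Omega(n)$ time (indeed $f(n) = \Omega(n)$ in the two applications, $f(n)=dn$ and $f(n)=dn+n\log n$). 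The delicate point throughout is the coupling of randomization with superadditivity: the sizes $n_j$ are random, so $\sum_j f(n_j)\le f(n)$ must be applied pathwise and only then averaged, while the logarithmic depth factor is forced deterministically by $n_j \le cn$ — it is precisely this separation of ``$f$-mass controlled in aggregate'' from ``depth controlled per branch'' that I would make sure to state carefully.
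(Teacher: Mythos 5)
Your proposal is correct, and the argument you ultimately settle on --- an $O(\log n)$-depth recursion tree forced deterministically by the factor-$c$ shrinkage, with the per-level cost bounded by $f(n)$ via superadditivity (applied pathwise to the realized sizes, then averaged) --- is exactly the paper's proof, which pads the level sum with $f(1)\cdot(n-r)$ terms to reach $f(n)$. Your opening substitution-method induction and your explicit bookkeeping (conditioning on the random partition, the base-case leak at $f(1)\log 1 = 0$, and the per-node $O(1)$ overhead requiring $f(n) = \Omega(n)$ or absorption into the division cost) are sound and in fact more careful than the paper's one-paragraph treatment, which silently elides the randomness and overhead issues.
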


\begin{proof}
Consider the recursion tree. Every time we divide a problem, the sub-problem has size $\le c$ times the parent problem. Thus, there are $O( \log n)$ levels in the recursion tree. Consider one level. Let the sizes of the problems at this level be $n_1, n_2, \ldots n_k$. It is easy to see using induction that the sum of sizes at any level is $\le n$. So, let $\sum n_i = r \le n$. The contribution of this level is then $\sum f(n_i) \le \sum f(n_i) + f(1) \cdot (n - r) \le f(n)$ (using $f(u) + f(v) \le f(u + v)$ for all $u, v \ge 1$). Therefore, the total complexity is $O(f(n) \log n)$.
\end{proof}

\newpage

\section{Appendix: Algorithms}\label{Appendix B}

We present the pseudocode of algorithms that were not included in the main body of the paper. We first introduce \texttt{CONSTRUCT-REPRESENTATION}, which takes as input the edges of $\tau=(T,\succ)$ and returns a data structure of $O(n)$ space which succinctly represents the binary relation $\succ$. \texttt{CONSTRUCT-REPRESENTATION} makes use of the subroutine \texttt{DFS}.

\begin{algorithm}[H]
    \SetKwInOut{Input}{Input}
    \SetKwInOut{Output}{Output}
    \SetAlgoLined
    
    \underline{CONSTRUCT-REPRESENTATION}\;
    \Input{A set $T$, an adjacency list $A$ with the outgoing edges from each element in $T$}
    \Output{A pair of arrays $(s,e)$}
    $t \gets 1$ \\
    Let $s$ and $e$ be integer arrays of $|T|$ elements \\
    Let $V$ be a boolean array of $|T|$ elements \\
    Set all elements in $V$ to $0$ \\
    $r \gets$ \texttt{GET-ROOT}($T$) \\
    \texttt{DFS}($r$) \\
    \Return{$(s,e)$}
\end{algorithm}

\begin{algorithm}[H]
    \SetKwInOut{Input}{Input}
    \SetKwInOut{Global}{Global}
    \SetAlgoLined
    
    \underline{DFS}\;
    \Global{A set $T$; an adjacency list $A$ with the outgoing edges from each element in $T$; an integer $t$, initialised to $1$; two integer arrays $s$ and $e$ of $|T|$ elements; a boolean array $V$ of $|T|$ elements, all initialised to $0$}
    \Input{A single element $r$}
    $V_r \gets 1$ \\
    $s_r \gets t$ \\
    $t \gets t+1$ \\
    Let $A_r = \{ a_1, \dots, a_k\}$ \\
    \For{$i=1, \dots, k$}{
        \If{$V_{a_i} = 0$}{
            \texttt{DFS}($a_i$)
        }
    }
    $e_r \gets t$ \\
    $t \gets t+1$ \\
\end{algorithm}

\noindent We also present \texttt{COMPARISON-QUERY}, which uses the data structure produced by \texttt{CONSTRUCT-REPRESENTATION} to answer comparison queries in $O(1)$ time per query.

\begin{algorithm}[H]
    \SetKwInOut{Input}{Input}
    \SetKwInOut{Output}{Output}
    \SetKwInOut{Global}{Global}
    \SetAlgoLined

    \underline{COMPARISON-QUERY}\;
    \Global{Two integer arrays $s$ and $e$}
    \Input{Two elements $u$ and $v$}
    \Output{$\mathcal{Q}(u,v)$}
    \uIf{$s_u \leq s_v$ and $e_v \leq e_u$}{
        \Return{$\succ$}
    }
    \uElseIf{$s_v \leq s_u$ and $e_u \leq e_v$}{
        \Return{$\prec$}
    }
    \Else{
        \Return{$||$}
    }
\end{algorithm}

\newpage \noindent Finally, we present \texttt{GET-EDGES-DET}, which sorts a tree of $n$ elements and width $w$ in $O(wn + n \log n)$ time.

\begin{algorithm}[H]
    \SetKwInOut{Input}{Input}
    \SetKwInOut{Output}{Output}
    \SetAlgoLined

    \underline{GET-EDGES-DET}\;
    \Input{A set $T$}
    \Output{The set of the $|T|-1$ edges of $(T,\succ)$}
    Let $T = \{t_1, \dots, t_n\}$ \\
    $H \gets T$ \\
    $\mathcal{C} \gets \{ \}$ // The chain decomposition \\
    \While{$H \neq \emptyset$}{
        Let $H = \{h_1,h_2,\dots,h_m\}$ \\
        $x \gets h_1$ \\
        \For{$i = 2, \ldots, m$}{
              \If{$x \succ h_i$}{
                  $x \gets h_i$
              }
        }
        $c \gets []$ //An empty list \\
        \For{$i=1, \ldots, m$}{
            \If{$x = h_i$ or $h_i \succ x$}{
                Append $h_i$ to $c$ \\
                $H \gets H \setminus \{h_i\}$
            }
        }
        Sort $c$ according to the relation $\succ$ \\
        $\mathcal{C} \gets \mathcal{C} \cup \{c\}$  \\
    }
    Let $\mathcal{C} = \{c_1,\dots,c_w$\} \\
    $E \gets \{\}$ //The set of edges \\
    \For{$i=1, \ldots, w$}{
        Let $c_i = \{ d_1, \dots, d_{|c_i|}\}$ \\
        Let $t_{n+1} \succ t_i$ for all $i \in \{1,\dots,n\}$ \\
        $x \gets t_{n+1}$ \\
        \For{$i=1, \ldots, n$}{
            \If{$t_i \succ d_1$ and $x \succ t_i$}{
                $x \gets t_i$
            }
        }
        \If{$x \neq t_{n+1}$}{
            $E \gets E \cup \{(x,d_1)\}$
        }
        
        \For{$i=1, \dots, |c_i| - 1$}{
            $E \gets E \cup \{(d_i, d_{i+1})\}$
        }
    }
    \Return{E}
\end{algorithm}

\end{document}